\newtheorem{lemma}{Lemma}
\DeclarePairedDelimiter{\ceil}{\lceil}{\rceil}
\DeclarePairedDelimiter\floor{\lfloor}{\rfloor}
\DeclareMathOperator{\rem}{\mathrm{mod}}
\DeclareMathOperator{\sign}{sign}
\newcommand{\rom}[1]{\uppercase\expandafter{\romannumeral #1\relax}}
\newcommand{\Fo}[1]{\color{red}#1 \color{black}}
\newcommand{\Ch}[1]{\color{magenta}#1 \color{black}}
\newcommand{\fig}{Fig.}    
\newcommand{\tab}{Table}  
\newcommand{\secR}{Section}
\title{Antenna Combiner for Periodic Broadcast V2V Communication Under Relaxed Worst-Case Propagation\\
\thanks{This research has been carried out in the antenna systems center \emph{ChaseOn} in a project financed by Swedish Governmental Agency of Innovation Systems (Vinnova), Chalmers, Bluetest, Ericsson, Keysight, RISE, Smarteq, and Volvo Cars.}%
\thanks{The authors are with the Communication Systems Group, Department of Electrical Engineering, Chalmers University of Technology, 412 96 Gothenburg, Sweden (e-mail: chouaib@chalmers.se; erik.strom@chalmers.se; fredrik.brannstrom@chalmers.se)}%
}%
\newcommand{\psiZ}{\hat{\psi}_1}  
\newcommand{\Apl}{A}  
\newcommand{\Dpl}{d}
\newcommand{\Dplz}{d_{\mathrm{ref}}}
\newcommand{\cJ}{c_1} 
\newcommand{\cJJ}{c_2} 
\newcommand{\ph}{\alpha}  
\newcommand{\phI}{\beta}  
\newcommand{\phV}{\boldsymbol{\alpha}}
\newcommand{\om}{\Omega} 
\newcommand{\ps}{\psi}  
\newcommand{\psV}{\boldsymbol{\psi}}  
\newcommand{\phS}{\tilde{\alpha}} 
\newcommand{\phBar}{\bar{\alpha}} 
\newcommand{\Astar}{\mathcal{A}^{\star}} 
\newcommand{\bJ}{b} 
\newcommand{\aJ}{a}
\newcommand{\LPL}{L_{\mathrm{PL}}}
\newcommand{\LA}{L_{\mathrm{\Omega}}}
\newcommand{\aom}{a_{\mathrm{\Omega}}}
\newcommand{\bom}{b_{\mathrm{\Omega}}}
\newcommand{\apl}{a_{\mathrm{PL}}}
\newcommand{\bpl}{b_{\mathrm{PL}}}
\newcommand{\aphi}{a_{\mathrm{\phi}}}
\newcommand{\aphiz}{b_{\mathrm{\phi}}}
\newcommand{\agph}{a_{\mathrm{PH}}}
\newcommand{\bgph}{b_{\mathrm{PH}}}
\newcommand{\gr}{g^{\mathrm{r}}}  
\newcommand{\omt}{\Omega^{\mathrm{s}}}  
\newcommand{\omr}{\Omega^{\mathrm{r}}}  
\newcommand{\phr}{\alpha^{\mathrm{r}}}
\newcommand{\phrV}{\boldsymbol{\alpha}^{\mathrm{r}}}
\newcommand{\psrV}{\boldsymbol{\psi}^{\mathrm{r}}}
\newcommand{\phrI}{\beta^{\mathrm{r}}}
\newcommand{\Gr}{G_{\mathrm{r}}}
\newcommand{\psr}{\psi^{\mathrm{r}}}  
\newcommand{\pst}{\psi^{\mathrm{s}}}
\newcommand{\Lt}{L_{\mathrm{s}}}
\newcommand{\Lr}{L_{\mathrm{r}}}
\newcommand{\phir}{\phi^{\mathrm{r}}}
\newcommand{\Tm}{{T_\mathrm{m}}}
\newcommand{\SsnrA}{S_\mathrm{\Omega}}
\newcommand{\JA}{J_\mathrm{\Omega}}
\newcommand{\Xstar}{\mathcal{X}^\star}
\newcommand{\Nc}{N_{\textrm{c}}}
\newcommand{\dx}{d_{\mathrm{x}}}
\newcommand{\Lw}{d_{\mathrm{y}}}
\newcommand{\dv}{\Delta_{\mathrm{v}}}
\newcommand{\da}{\delta_{\mathrm{a}}}
\newcommand{\daz}{\delta_{\mathrm{a},0}}
\newcommand{\Iq}{\mathcal{I}}
\newcommand{\SsnrPL}{S_{\mathrm{PL}}}
\newcommand{\SsnrPH}{S_{\mathrm{\phi}}}
\newcommand{\phim}{\phi_{\mathrm{min}}}
\newcommand{\JPL}{J_{\mathrm{PL}}}
\newcommand{\JPH}{J_{\mathrm{\phi}}}
\newacronym{AoI}{AoI}{age-of-information}
\newacronym{IRT}{IRT}{inter-reception time}
\newacronym{BrEP}{BrEP}{burst error probability}
\newacronym{PEP}{PEP}{packet error probability}
\newacronym{SNR}{SNR}{signal-to-noise ratio}
\newacronym{SBR}{SBR}{single bounce reflection}
\newacronym{LOS}{LOS}{line of sight}
\newacronym{AOA}{AOA}{angle of arrival}
\newacronym{AOD}{AOD}{angle of departure}
\newacronym{ACN}{ACN}{analog combining network}
\newacronym{ASN}{ASN}{antenna switching network}
\newacronym{ABN}{ABN}{analog beamforming network}
\newacronym{VU}{VU}{vehicular user}
\newacronym{CAM}{CAM}{cooperative awareness message}
\newacronym{MRC}{MRC}{maximal ratio combining}
\newacronym{RF}{RF}{radio frequency}
\newacronym{CSI}{CSI}{channel state information}
\newacronym{C-ITS}{C-ITS}{cooperative intelligent transportation systems}
\newacronym{MIMO}{MIMO}{multiple-input multiple-output}
\newacronym{V2V}{V2V}{vehicle-to-vehicle}
\newacronym{Tx}{Tx}{transmitter}
\newacronym{Rx}{Rx}{receiver}
\newacronym{OFDM}{OFDM}{orthogonal frequency division multiplexing}
\newacronym{MP}{MP}{multipath} 
\newacronym{PL}{PL}{path-loss} 
\newacronym{CDD}{CDD}{cyclic delay diversity}
\newacronym{LS}{LS}{least squares}
\definecolor{electricpurple}{rgb}{0.75, 0.0, 1.0}
\definecolor{flamingopink}{rgb}{0.99, 0.56, 0.67}
\definecolor{c_orange}{rgb}{1.0, 0.8, 0.361}
\definecolor{c_Sb}{rgb}{1.0, 0.8, 0.361}
\definecolor{flame}{rgb}{0.89, 0.35, 0.13}
\begin{document}
\glsdisablehyper
\twocolumn
\author{Chouaib~Bencheikh~Lehocine, Erik~G.~Str{\"{o}}m,~\IEEEmembership{Fellow,~IEEE,} and Fredrik~Br{\"{a}}nnstr{\"{o}}m}%
	\maketitle
\begin{abstract}

The performance of a previously developed \gls{ACN} of phase shifters for vehicle-to-vehicle communication is investigated. The original \gls{ACN} was designed to maximize the sum of the \glspl{SNR} for $K$ consecutive, broadcast, periodic cooperative awareness messages when communication is over a dominant path whose \gls{AOA} is constant over the duration of $K$ packets. In this work, we relax this scenario by allowing the dominant path \gls{AOA} and \gls{PL} to be time-variant. Assuming a highway scenario with \gls{LOS} propagation between vehicles, we use affine approximations to model the time variation of different path quantities, including the \gls{PL}, the relative distance-dependent phase shift between antennas, and the \gls{AOA}-dependent far-field function of the antennas. Using these approximations, we analytically derive the \gls{ACN} sum-\gls{SNR} as each one of these quantities vary over the duration of $K$ packets. Moreover, we suggest a phase slope design rule that is robust against time variation of the dominant path and optimal under time-invariant conditions. Finally, we validate this design rule using numerical computations and an example of vehicular communication antenna elements.



\end{abstract}
\glsresetall
\glsunset{C-ITS}
\section{Introduction}
\IEEEPARstart{C}{ooperative} intelligent transportation system (C-ITS) rely on the exchange of \glspl{CAM} to increase traffic safety and efficiency on roads. 
\glspl{CAM} are all-to-all broadcast, periodic packets carrying status information about the dynamics of the disseminating vehicles.  Due to the broadcast nature of these messages, vehicles need to have an antenna system with good gain in all directions. This is challenged by the fact that antenna patterns are distorted by vehicles body, mounting positions, the housing of antennas, etc. These factors cause antennas to have a low gain or even blind spots in certain directions~\cite{AntPlac2007,AntPlac_2_2014,AntPlac2010}. 
In the case that the transmitted and/or received signal coincides with directions where the antenna system has low gain then the cooperative service can be compromised. Therefore, in~\cite{ACN}, it is proposed to process multiple antennas using an \gls{ACN} of phase shifters to mitigate the vehicle-body distortions, and enable omnidirectional coverage at the receive side. \gls{ACN} is a low-cost, low-complexity solution that does not rely on \gls{CSI}, unlike the classical digital counterparts, selection combining, equal gain combining, and \gls{MRC}.
In~\cite{lehocine2020HC}, the fully analog \gls{ACN} combining was enhanced using an \gls{MRC}-based digital stage, to form a hybrid-combining scheme.
Then, a transmit-side counterpart to \gls{ACN}, namely, an \gls{ABN} of phase shifters has been developed in~\cite{lehocine2021abn}. \gls{ABN} does not rely on \gls{CSI} either, and is fully analog, and hence has a lower cost than digital schemes, like \gls{CDD}~\cite{CDD}, and Alamouti~\cite{Alamouti}.


To assess the performance of the multiple antenna schemes in~\cite{ACN,lehocine2020HC,lehocine2021abn} \gls{BrEP} has been used. \gls{BrEP} is related to the reliability at a \gls{C-ITS} application level, where an outage occurs if the available information about a certain vehicle at a receiving end is outdated after a loss of $K$ consecutive \glspl{CAM}. Using certain assumptions, minimizing \gls{BrEP} is found~\cite{ACN,lehocine2020HC,lehocine2021abn} to be equivalent to maximizing the sum-\gls{SNR} of $K$ consecutive packets, and it is the performance metric used in the design of the schemes.
To ensure robust \gls{V2V} communication, \gls{ACN} and \gls{ABN} have been designed assuming a worst-case propagation environment, corresponding to a single dominant path with an \gls{AOA} and an \gls{AOD} that are negligibly varying over the duration of $K$ consecutive packets. Consequently, the distance-dependent relative phase shifts between the antennas, and \gls{PL} were assumed to be negligibly varying over the same duration.  A set of phase slopes that maximize the sum-\gls{SNR} of $K$ consecutive \glspl{CAM} when the dominant path direction coincides with the worst-case \gls{AOA} and \gls{AOD} of the antenna systems have been derived in~\cite{ACN,lehocine2020HC,lehocine2021abn}. 

The negligibly varying single dominant path assumption holds under certain positions and speeds of the \gls{Tx} and the \gls{Rx}, and potential reflecting objects. Therefore, we are interested in investigating the performance of the designed multiple antenna system when the geometries and the mobility of the \gls{Tx} and \gls{Rx} result in a time-varying (instead of an unvarying) single dominant path. Namely, we are interested in investigating the performance of the system when the \gls{AOA}, \gls{AOD}, the relative phase shifts between the antennas, and the \gls{PL} are time-varying over the duration of $K$ packets.

The optimal phase slopes under worst-case propagation assumption~\cite{ACN,lehocine2021abn} are available for a generic $\Lt\times\Lr$ \gls{ABN}-\gls{ACN} where $\Lt$, and $\Lr$ are the number of transmit and receive antennas, respectively. However, for such a system, it is difficult to model, analytically study, and understand the effects of time variation of the \gls{AOA}, the \gls{AOD}, the relative phase shift between antennas, and the \gls{PL}. Therefore, we study the effects of these quantities for a $1\times 2$ \gls{ACN} system (equivalent to $2\times1$ \gls{ABN}). In particular, we first approximate, based on a reference highway scenario with a \gls{LOS} propagation, the time variation of these quantities at medium and large distances between the \gls{Tx} and \gls{Rx} using an affine function. Besides distance, antenna separation and speeds are taken into account in the time variation modeling. Note that a dominant path is not necessarily a \gls{LOS}, it can be a reflected path too. However, we assume a \gls{LOS} propagation since it simplifies the modeling of the quantities of interest. Second, we analytically derive the loss in sum-\gls{SNR} incurred on the \gls{ACN} system for different phase slopes when the dominant path quantities are time-varying. Third, we derive a design rule to choose a robust phase slope that sustains good performance when the \gls{AOA}, the phase shifts, and the \gls{PL} vary over the duration of $K$ \glspl{CAM}. Finally, using numerical computation, we visualize the performance of the \gls{ACN} and validate the design rule.
The investigation of the $1\times 2$ \gls{ACN} system serves as a guideline for investigating $1\times \Lr$, $\Lr>2$, and $\Lt\times \Lr$ systems performance. The loss functions in sum-\gls{SNR} can be numerically evaluated (they seem analytically intractable), then a design rule for robust phase slopes under a time-varying  dominant path can be derived. 

A summary of the contributions of this paper follows.
\begin{itemize}
	\item Based on a reference highway scenario, that takes into account distance, speed, and antenna separation,  we model the time variation of phase shifts between antennas, \gls{PL}, and \gls{AOA}-dependent antenna responses using an affine function at medium to large distances between the \gls{Tx} and \gls{Rx}.
	\item We analytically derive the loss function in \gls{ACN} sum-\gls{SNR} when the three quantities vary separately.
	\item We set a design rule to choose the most robust phase slope against the effects of time variation of the single dominant component quantities.
	\item Using numerical computation we validate the design rule and illustrate the performance of \gls{ACN} under time-varying \gls{AOA}, phase shifts, and \gls{PL}. 
\end{itemize}

\section{Preliminaries}
In this section, we briefly reintroduce \gls{ACN}/\gls{ABN}, and restate the worst-case propagation assumptions used to design them alongside the obtained optimal phase slopes in~\cite{ACN,lehocine2021abn}.

\subsection{Multiple Antenna Scheme}
Consider the multiple antenna scheme shown in \fig~\ref{Fig:ACN} with analog time-varying phase shifters that are modeled following 
\begin{align}\label{eq:phaseShifter:ACN}
 \mathrm{e}^{\jmath(\ph_l t+ \phI_l)}, \quad 0\leq l\leq L-1,
\end{align}
where $\ph_l\in \mathbb{R}$, and $\phI_m\in [0,2\pi)$ denote the phase slope and the initial unknown phase offset, respectively, and $L$ is the number of antennas. The scheme is referred to as \gls{ACN} when used at the \gls{Rx}, and \gls{ABN} when used at the \gls{Tx}.
A multiplier of $1/\sqrt{L}$ is introduced in~\eqref{eq:phaseShifter:ACN} to ensure equal transmitted power with respect to a single transmit antenna case when the scheme is used at the transmitter.


\subsection{Data Traffic Model and Performance Metric}
Consider a periodic traffic of \glspl{CAM} that are broadcasted by \glspl{VU} every $0.1\leq T\leq 1~$s~\cite{CAM}. \glspl{CAM} carry status information like position, speed, heading, etc. Their repetition interval $T$ depends on how fast the dynamics of a vehicle are changing, the channel load, and the requirements of \gls{C-ITS} applications~\cite{CAM}. \glspl{CAM} are short packets with sizes that range between $100$ and $500$ bytes~\cite{IEEE11p_rate_Jiang2008}, which correspond to a duration $\Tm< 0.7$~ms, assuming IEEE802.11p as an access technology and $6$~Mbit/s data rate. (For LTE-V2X technology, the \gls{CAM} duration corresponds to a subframe duration $\Tm = 1$~ms~\cite{Molina2020_11pVsLTE-V2X}.)
Observe that the \gls{CAM} duration is much smaller than the repetition interval $T$, $\Tm\ll T$, and this will be used to assert certain assumptions in the sections to come.

Consider that \gls{AoI}~\cite{Aoinfo} is used to assess the reliability of a \gls{C-ITS} application that relies on the information carried by \glspl{CAM}. At a receiving \gls{VU}, if the information available about a certain neighboring vehicle has not been updated within a maximum tolerable \gls{AoI}, $A_{\textrm{max}}$, then the \gls{C-ITS} application running at the receiving \gls{VU} is in outage. If latency between the transmission and reception of packets is neglected, an outage occurs if a burst of $K$ consecutive \glspl{CAM} is lost, implying that $A_{\textrm{max}}= KT$. The loss of $K$ consecutive packets was defined as \gls{BrEP} in~\cite{ACN}, and used in~\cite{ACN,lehocine2021abn} as a design metric for \gls{ACN} and \gls{ABN}. In particular, the multiple antenna schemes were designed to minimize the \gls{BrEP}. Under the assumption\footnote{Motivation and justification can be found in~\cite[\secR~III]{lehocine2021abn}.} that packet error probability follows an exponential function of \gls{SNR}, and that packet errors are statistically independent, minimizing \gls{BrEP} is found to be equivalent to maximizing the sum of \gls{SNR} of the $K$ consecutive packets~\cite[\secR~III]{ACN},~\cite[\secR~III.B]{lehocine2020HC}. Therefore, the sum-\gls{SNR} is the metric used in assessing the \gls{ACN}/\gls{ABN} performance in this paper. 

\begin{figure}[t]
	\centering
	\includegraphics[width=0.8\columnwidth]{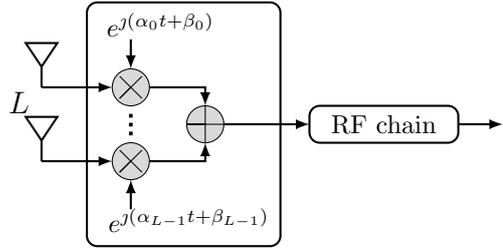}
\caption{\gls{ACN} with $L$ receive antennas.}
	\label{Fig:ACN}
\end{figure}
	\subsection{Channel Model and Worst-Case Propagation Scenario}\label{sec:sub:assumptions}
To ensure robust communication, \gls{ACN}/\gls{ABN} was designed under a scarce multipath propagation with a dominant path, and a few diffuse components with small angular spread, which is  typical in roads that are not surrounded by buildings, e.g., highways~\cite{channel_v2v_angular}. Such propagation is challenging for the antenna system since if the \gls{AOD} and the \gls{AOA} of the dominant path coincide with a direction where the antenna system has low gain, there is a risk to lose a packet. Furthermore, in the cases when the \gls{AOD} and the \gls{AOA} are approximately non-varying over the duration of $K$ consecutive packets, then there is a risk of an outage, i.e., loss of $K$ consecutive packets. Following this, the baseband channel between the transmitter and the receiver was modeled in~\cite{ACN,lehocine2021abn} using the single dominant physical path. Assuming a $1\times L$ \gls{ACN} system (this is equivalent to $L\times 1$ \gls{ABN}), 
the channel gain is given by~\cite[Ch.~6]{WirelessCom_Molisch}
	\begin{align}
h_{l}(t)= a(t) g_l(\phi)\mathrm{e}^{-\jmath 	\om_l(t)}, \quad  l=0,\ldots,L-1,\label{def:channel:v1} 
\end{align}
where $a(t)$ is the (complex) path amplitude, $\phi$ is the \gls{AOA}, $g_l$ is the far-field function of the receive antenna $l$ in the azimuth plane, and $\om_l$ is the relative phase shift at antenna $l$ with respect to the reference antenna with index $l=0$. It is given by 
\begin{align}\label{eq:om:d}
	\om_l(t)=2\pi/\lambda (d_{l}(t)-d_{0}(t)),
\end{align}
where $d_{l}(t)$ is the propagation distance between the receive antenna with index $l$ and the transmit antenna, and $\lambda$ is the carrier wavelength. Note that $\om_0(t)=0$. The far-field function of the transmit antenna, which can be accounted for by $a(t)$, is assumed for simplicity to be isotropic throughout the paper, and thus $a(t)$ models solely the amplitude of the dominant path. 

Since a dominant path with a negligibly varying $\phi$ results in a higher risk of an outage, in~\cite{ACN,lehocine2021abn} it is assumed that the speed and position of the \gls{Tx} and the \gls{Rx}, alongside potential interacting objects, are such that 
$\phi$ is approximately constant over the duration of $KT$~s. Consequently, $\om_l$ and \gls{PL} are also assumed to be approximately constant over $KT~$s. These are referred to as the worst-case propagation assumptions, and they were used when designing \gls{ACN}/\gls{ABN}.



\subsection{Sum-SNR Under Worst-Case Propagation}
Given the transmission of a \gls{CAM} packet, employing~\eqref{eq:phaseShifter:ACN} and ~\eqref{def:channel:v1}, the received signal by a \gls{VU} after \gls{ACN} combining (as shown in \fig~\ref{Fig:ACN}) is expressed as
\begin{align}\label{eq:r:signal}
	r(t) = a(t)x(t)\sum_{l=0}^{L-1} g_l(\phi) \mathrm{e}^{-\jmath (\om_l -\ph_l t - \phI_l)} + \sum_{l=0}^{L-1}n_l(t),
\end{align}
where $x(t)= \tilde{x}(t-\tau(t))$, $\tilde{x}(t)$ is the transmitted baseband signal, $\tau(t)=2\pi d_0(t)/\lambda$ is the propagation delay, and $n_l(t)$ is an independent zero-mean additive white Gaussian noise over the signal bandwidth with variance $\sigma_{\textrm{n}}^2$.
Let $P_{\textrm{r}}=\mathbb{E}\{|a(t)x(t)|^2\}$ be the average received power which is approximately the same for the $K$ packets since the \gls{PL} is assumed to be approximately constant over $KT~$s, and let
\begin{align}
\psi_l(t)&\triangleq \om_l(t)-\phase{g_l(\phi)}-\phI_l,
\label{def:psi}
\end{align}  
which is non-varying over $KT$~s since $\om_l(t)$ is assumed to be non-varying over the same duration.
Then, using~\eqref{eq:r:signal} and~\eqref{def:psi}, we express the \gls{SNR} of the $k^{\text{th}}$ packet as
\begin{align}\label{eq:SNRk}
	\gamma_k = \frac{P_{\mathrm{r}}}{L\sigma^2_{\mathrm{n}}} \bigg|\sum_{l=0}^{L-1} |g_l(\phi)|\mathrm{e}^{-\jmath(\ps_l-\ph_l kT)} \bigg|^2,
\end{align}
where the phase variation is assumed negligible over a packet duration since $\Tm \ll T$, and thus the approximation $\mathrm{e}^{-\jmath(\ps_l-\ph_l t)}\approx \mathrm{e}^{-\jmath(\ps_l-\ph_lkT)}$ when $kT\leq t\leq kT+\Tm$, is employed when deriving $\gamma_k$.
Then we can readily express the normalized sum-\gls{SNR} with respect to $P_{\mathrm{r}}/\sigma^2_{\mathrm{n}}$ using the column vectors $\phV=[\ph_0,\ph_1,\ldots,\ph_{L-1} ]^\textsf{T}$ and $\psV=[ \ps_0,\ps_1,\ldots,\ps_{L-1} ]^\textsf{T}$, as
\begin{align}
S(\phi,\phV,\psV )&=\sigma^2_{\mathrm{n}}/P_{\mathrm{r}}\sum_{k=0}^{K-1}\gamma_k\nonumber\\ 
&=K\sum_{l=0}^{L-1}\frac{|g_l(\phi)|^2}{L}+
J(\phi,\phV,\psV ),
\label{eq:S:GplusJ}
\end{align}
where 
\begin{align}
J(\phi,&\phV,\psV )=\frac{2}{L} \sum_{l=0}^{L-2}\sum_{m=l+1}^{L-1}|g_l(\phi)||g_m(\phi)|  \nonumber \\
&\times\sum_{k=0}^{K-1}\cos\big(\ps_m-\ps_l-(\ph_m-\ph_l)kT\big). \label{eq:J}
\end{align}

To design \gls{ACN}/\gls{ABN} an optimization problem is defined in~\cite{ACN,lehocine2021abn} to find the phase slopes that maximize the sum-\gls{SNR} for the worst-case \gls{AOA}/\gls{AOD}, and for the worst-case $\psV$ since it depends on the initial unknown and uncontrollable phase offsets $\{\phI_l\}$. 
The solution to the optimization problem is found in~\cite{ACN,lehocine2021abn} to be the same for any \gls{AOA}/\gls{AOD}, and it is given by
	\begin{align}\label{eq:optim}
S^\star(\phi)&=\sup_{\phV\in\mathbb{R}^L}~ \inf_{\psV\in [0,2\pi]^L }S(\phi,\phV,\psV )=K\sum_{l=0}^{L-1}\frac{|g_l(\phi)|^2}{L}.
\end{align}
The optimal phase slopes are independent of the far-field functions of the antennas and exist when $L \leq K$ and they satisfy~\cite{ACN,lehocine2021abn} 
\begin{align}\label{eq:optimality:condition}
		(\ph_m-\ph_l)T/2 \in \mathcal{X^*}, \quad    0\leq l<m\leq L-1,
\end{align}
where
\begin{align}\label{def:Xstar}
	\mathcal{X^*}\triangleq\{q\pi/K:q\in\mathbb{Z}\}\setminus\mathcal{X},\quad \mathcal{X} \triangleq \{q\pi: q\in \mathbb{Z} \}.
\end{align}

The condition in~\eqref{eq:optimality:condition} can be satisfied when $L=2$ using a single phase shifters, where $\ph_0=\phI_0= 0$, and $\ph_1 T/2\in  \mathcal{X^*}$, $\phI_1\in [0,2\pi)$. 


	\begin{figure}[b]
	\includegraphics[width=\columnwidth]{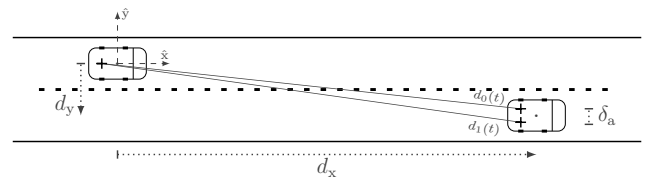}
	\caption{Two-lane highway scenario with two reference vehicles.}
	\label{fig:scenario}
\end{figure}

\section{Time-Varying Single Dominant Path Propagation}
The optimal phase slopes for \gls{ACN}/\gls{ABN} were designed under the worst-case assumption of a negligibly varying single dominant path over $KT~$s. In the following, we aim to investigate the effects of time variation of the dominant path on \gls{ACN}/\gls{ABN} performance. That is done following an investigation of the impact of variation of $\phi$, $\om_l$, and \gls{PL} separately. Moreover, we aim to define a design rule to pick the most robust choice of phase slopes to the variation of these quantities. 
To be able to perform an analytical investigation of the variation of $\phi$, $\om_l$, and \gls{PL} we resort to the simple scheme of a $1\times2$ system. It allows us to accurately understand the effects of the quantities and to draw design methodology. These can be used as guidelines when investigating the time variation effects of these quantities on $\Lt\times \Lr$ \gls{ABN}-\gls{ACN} system.


\begin{figure}
	\includegraphics[width= \columnwidth]{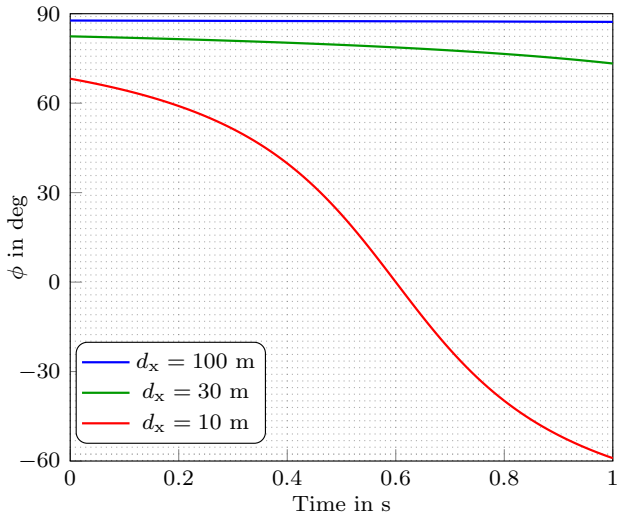}
	\caption{Time variation of $\phi$ for different initial distances between the \gls{Tx} and the \gls{Rx}, $\dv = -60~$km/h, $\Lw= -4~$m.}
	\label{fig:phi:var}
\end{figure} 
\begin{figure}
			\includegraphics[width= \columnwidth]{/0_Omega_var_flat.pdf}
	\caption{Time variation of $\om_1$ for different initial distances between the \gls{Tx} and the \gls{Rx}, $\dv = -60~$km/h, $\Lw= -4~$m.}
	\label{fig:Omega:var}
\end{figure}

To model the variation of $\phi$, $\om_l$, and \gls{PL} and to evaluate the sum-\gls{SNR} of the multiple-antenna scheme, we consider two reference \glspl{VU} on a highway as shown in \fig~\ref{fig:scenario}, one employing two antennas $L=2$ and it is referred to as the receiving \gls{VU}, and the other is employing a single antenna, and it is referred to as the transmitting \gls{VU}. We assume a \gls{LOS} propagation between the two \glspl{VU}, which corresponds to the dominant component.
 If the two \gls{VU} are moving at the same speed, then the worst-case propagation assumptions in \secR~\ref{sec:sub:assumptions} (non-variation of the channel) are fully satisfied. In addition, if the two vehicles are moving at different speeds but in the same lane, then the \gls{AOA} and $\om_l$ are fixed, while the \gls{PL} varies. However, if the vehicles are moving at different speeds and located on different lanes, then the geometries between the \gls{Tx} and \gls{Rx} change over time and result in variations of all three quantities of interest. In other words, the accuracy of the aforementioned worst-case scenario assumptions depends on the speed and lane position of vehicles (for this particular scenario).
The use of such a scenario allows us to model the variations of the three quantities of interest. 

To define certain parameters for the scenario in \fig~\ref{fig:scenario} we take the initial position of the transmitting \gls{VU} as the origin $(0,0)$ of a Cartesian coordinates system. Then, we define the following parameters.
\begin{itemize}
	\item $\dx$, the initial longitudinal position of the receiving \gls{VU} with respect to the transmitting \gls{VU} (measured from the center of vehicles). 
	\item $\Lw$, the lateral position of the receiving \gls{VU} with respect to the transmitting \gls{VU}. 
	
	\item $\dv=(v_{\textrm{r}}-v_{\textrm{t}})$, the speed difference between the receiving and the transmitting \glspl{VU}.
	\item $\da$, the receive antennas separation. 
\end{itemize}
For simplicity, we assume that the antennas of the \gls{Rx} are mounted along the lateral axis of the vehicle. 
\begin{figure}
	\includegraphics[width= \columnwidth]{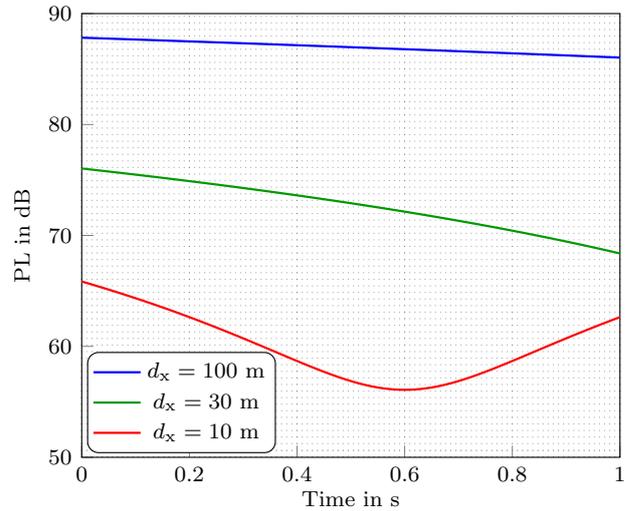}
	\caption{Time variation of \gls{PL} following WINNER+B1 pathloss model~\cite{3GPP_rel14_36p885} for different initial distances between the \gls{Tx} and the \gls{Rx}, $\dv = -60~$km/h, $\Lw= -4~$m.}
	\label{fig:PL:var}
\end{figure}

The defined parameters $\dx$, $\Lw$, $\dv$ and $\da$ can span a wide range of values. However, we limit the range of these parameters to intervals that are large enough to allow us to observe the different effects resulting from changing geometries between the \gls{Tx} and \gls{Rx}.  In particular, we set $\dx\in [-100,100]~$m, since for larger distances than $100$~m, $\phi$, $\om_l$ and \gls{PL} are approximately non-varying over $KT$~s.  From another aspect, taking into account typical cruising speeds on highways, and assuming a maximum regulatory speed of $100~$km/h, we can restrict $\dv= (v_{\textrm{t}}-v_{\textrm{r}}) \in [-60,60]~$km/h, which corresponds to absolute speeds in the range $[70,130]~$km/h. 
Furthermore, to cover different lane positions we use $\Lw\in[-4,0,4]$ corresponding to a lane width of $4~$m.
Lastly, we consider two cases for $\da$,  $\da=\lambda/2\approx 0.025~$m and $\da=10\lambda\approx 0.5~$m at carrier frequency $f_{\textrm{c}}=5.9~$GHz , corresponding to a small and large antenna separations, respectively. 

In Figs.~\ref{fig:phi:var}--\ref{fig:PL:var} we show examples of the time variation of the three quantities 
for different $\dx$ when $\Lw=-4$~m, and $\dv=-60$~km/h, which is the speed within $[-60,60]$~km/h that results in the maximum change of the quantities.
From the figures, we observe that at $\dx = 100~$m the \gls{AOA}  exhibits negligible change and so does $\om_1$ ($\om_0=0$) and the \gls{PL}, implying that the worst-case assumptions in \secR~\ref{sec:sub:assumptions} hold.
At a short distance $\dx=10~$m, $\phi$ is very rapidly changing. Thus, even if the \gls{AOA} for a received packet coincides with a low gain of the antenna system, the \gls{AOA} for the successive packets (e.g., when $T=0.1$~s) can be coinciding with a high gain of the antenna system, which lowers the risk of consecutive packet losses. 
More importantly, the \gls{PL} is very low at such short distances as seen in \fig~\ref{fig:PL:var} which results in a high \gls{SNR} and a high probability of decoding packets. Thus, this region is not critical for the communication system. 


At the medium distance $\dx = 30~$m, the \gls{AOA} is slowly changing, with an approximate rate of $9~\deg/1~$s. 
 In this case, if the \gls{AOA} for a received packet coincides with a low gain of the antenna system, there is a risk that several successive packets (e.g., when $T=0.1$~s, $\phi$ changes by $0.9$~deg/$T$) experience comparably low gain as well. 
The \gls{PL}, which is also slowly changing, is neither too high nor too low in this region. 
Therefore, there is a risk of losing a burst of consecutive packets due to the alignment of the dominant path with a direction of low gain of the antenna system. From another aspect, the change in $\om_1$ is fast (greater than $2\pi$~rad$/1$~s for $\da=10\lambda$), which impacts the \gls{ACN}/\gls{ABN} system by introducing a time-varying phase offset to the preset phase shifters. Proper antenna processing is of paramount importance in this region of medium to large distances, and it will therefore be the focus of the coming analytical investigations. 

In the coming three sections, we investigate the effects of time variation of $\phi$, $\om_l$, and \gls{PL} on \gls{ACN}/\gls{ABN} performance.
When $\phi$ changes, there is a change in the channel phase shift which is captured by $\om_l$, and a change in antenna far-field functions $g_l(\phi)$, in both phase and amplitude. When we refer to time variation effects of $\phi$ we mean the far-field functions variation effects. 
To simplify the analysis, we study the effects of time variation of each quantity separately. We start with $\om_l$, followed by \gls{PL}, and finally $\phi$.
As mentioned earlier, the focus of the analysis is on medium to large distances. This will be backed up by a numerical assessment of the system performance at short distances in the numerical results section.


\section{Channel Phase ($\Omega$) Variation Effects}\label{sec:omega}
In this section, we first model $\om_l$ based on the reference scenario shown in \fig~\ref{fig:scenario} when the distance between the two \glspl{VU} is medium to large. We then derive the sum-\gls{SNR} of $1\times 2$ \gls{ACN} system under the effects of time variation of $\om_l$. Last, we derive a design rule to pick a robust phase slope under these conditions.






\subsection{Channel Phase Shift ($\om$) Model}\label{sec:sub:omega:var}
Given that $L=2$, and $\Omega_0(t)=0$, under \gls{LOS} propagation between the reference \glspl{VU} shown in \fig~\ref{fig:scenario}, the relative phase shift at receive antenna $l=1$ with respect to the reference antenna ($l=0$), $\om_1(t)\triangleq \om(t)$ is modeled by 
\begin{align}
\om(t)	&= 2\pi/\lambda (d_{1}(t)-d_{0}(t)) \nonumber \\
				& = 2\pi/\lambda \bigg(   \sqrt{(\dx+\dv   t)^2+(\Lw-\da/2)^2}  - 			\nonumber\\
				& \qquad \sqrt{(\dx+\dv   t)^2+(\Lw+\da/2)^2}      \bigg).\label{eq:omega:full}
\end{align}
Observe that as $\dx$ becomes large $\dv /\dx \rightarrow 0$, and $\om$ becomes approximately constant as seen in \fig~\ref{fig:Omega:var}.  Moreover, the smaller the antenna separation is $\da$, the slower the time variation of $\om$.
To investigate the effect of this time variation, we resort to the approximation
\begin{align}
		\om(t) \approx  \bom+ \aom~t,\label{eq:omega:app:affine}
\end{align}
where $\aom $ and $\bom$ can be obtained using first-order Taylor series, or the \gls{LS} method. 
When first-order Taylor expansion around $t=t_0$ is used, $\aom$ is given by
\begin{align}
&\aom= \frac{2\pi}{\lambda} \dv    \bigg (\frac{(\dx+\dv t_0)}{\sqrt{(\dx+\dv t_0) ^2+(\Lw-\da/2)^2}} \nonumber \\ &\qquad -\frac{(\dx+\dv t_0)}{\sqrt{(\dx+\dv t_0)^2+(\Lw+\da/2)^2}} \bigg),\label{eq:a:omega:taylor}
\end{align}
and $\bom = \om(t_0) -\aom t_0$. 
Based on \fig~\ref{fig:Omega:var}, we expect the affine approximation to have good accuracy at medium to large distances with $\dv \in [-60,60]~$km/h. At short distances, the approximation is not expected to be very accurate, especially when the antenna separation is large (e.g., $\da = 10 \lambda$), and the speed difference is high.

\subsection{Sum-\gls{SNR}}
To evaluate the sum-\gls{SNR} of \gls{ACN}/\gls{ABN} taking into account the effects of time variation of $\om$ only, we assume the use of isotropic antennas at the \gls{Rx}, i.e., $g_0(\phi) = g_1(\phi) = 1$, $\forall \phi$. Thus, despite that $\phi$ varies, the gain of the antenna system does not vary. Moreover, we assume that \gls{PL} is approximately non-varying over $KT$~s. This allows us to disentangle the effects of variations of the channel phase $\om$ from the effects of variation of \gls{AOA} and \gls{PL}.
Now, assume that the \gls{ACN} is implemented using a single phase shifter, implying that $\ph_0=\phI_0=0$, and $\ph_1 \triangleq \ph \in  [0,2\pi) $, $\phI_1\triangleq \phI\in [0,2\pi)$. Then, from~\eqref{def:psi} and~\eqref{eq:omega:app:affine}, we have 
\begin{align}
	\ps_1(t)&= \om(t)-\phase{g_1(\phi)}-\phI\nonumber \\
	&\approx \underbrace{( \bom -\phase{g_1(\phi)} -\phI )}_{\psiZ} + \aom ~t, \label{eq:omega:var} 
\end{align}
while $\ps_0 = -\phase{g_0(\phi)}$ ($= 0$ when the antenna is isotropic). 
 Assuming that the variation of $\ps_1(t)$ due to $\aom  $ is negligible over a packet duration $\Tm\ll T$, we write
\begin{align}\label{eq:psi:packet:approx}
\ps_1(t)\approx \ps_1(kT),\qquad kT\leq t\leq kT+\Tm .
\end{align}

Following this, the sum-\gls{SNR}~\eqref{eq:S:GplusJ} can be expressed taking into account~\eqref{eq:omega:var}, the isotropic antenna characteristics, and $L=2$ as
	\begin{align}
	\SsnrA(\phi,\phV,\psV )&=\sigma^2_{\mathrm{n}}/P_{\mathrm{r}}\sum_{k=0}^{K-1}\gamma_k
	=K +\JA(\phi,\phV,\psV ),\label{eq:S:iso:1st}
	\end{align}
where $\JA$ is given by~\eqref{eq:J} and it simplifies to 
\begin{align}
	\JA(\phi,\phV,&\psV  )= |g_0(\phi)||g_1(\phi)|  \nonumber \\ \times&\sum_{k=0}^{K-1}\cos\big(\ps_1(kT)-\ps_0 -(\ph_1-\ph_0)kT\big)\nonumber \\
	&\approx \sum_{k=0}^{K-1} \cos\big(\psiZ + \aom kT- \ps_0-\ph kT\big)\nonumber \\
	&=\sum_{k=0}^{K-1} \cos\big((\psiZ-\ps_0) -(\ph-\aom  )kT\big)\nonumber \\
	&=\sum_{k=0}^{K-1} \cos\big(y -2 x k\big)\nonumber \\
	&\triangleq \JA(x, y), \label{eq:J:omega}
\end{align}
where the approximation follows from~\eqref{eq:omega:var}, and where
\begin{align}\label{def:y:x}
	y \triangleq (\psiZ-\ps_0)\quad ,\quad x \triangleq (\ph-\aom  ) T/2.
\end{align}
We can write the sum-\gls{SNR} as
\begin{align}\label{eq:S:iso}
	\SsnrA(x, y )= K + \JA(x, y).
\end{align}

Now, we can evaluate the performance of the system using a similar optimization problem as in~\eqref{eq:optim},
\begin{align}\label{eq:S:min}
	\inf_{y \in[0,2\pi) }\SsnrA(x, y) = K \bigg(1 + \inf_{y \in[0,2\pi) } \frac{\JA(x, y )}{K} \bigg).
\end{align}
That is, we account for the worst-case value of $y$, since it depends on the unknown, uncontrollable, initial phase shift  $\phI$ of the \gls{ACN}.  
When the channel phase variations are negligible, $\aom =0$, we know from~\eqref{eq:optim} that the optimal performance is given by $\inf_{y  }\SsnrA(x, y) = K$, which implies that $\inf_{y} \JA(x, y ) = 0$.
Therefore, we define the loss function as 
\begin{align}\label{eq:loss:omega}
\LA(x)\triangleq -\inf_{y \in[0,2\pi) } \frac{\JA(x, y )}{K},
\end{align}
and we characterize some of its properties in the following lemma.
\begin{lemma}\label{lemma:loss}
	Let $\LA(x)$ be as defined in~\eqref{eq:loss:omega}, $x\in \mathbb{R}$, and let $\mathcal{X}$ be as defined in~\eqref{def:Xstar}, then
	\begin{enumerate}
	    \item The loss function is given by
	    \begin{align}\label{eq:lemma:loss:cases}
		\LA(x)=\displaystyle \begin{cases}
		1, & x\in \mathcal{X}\\
	\displaystyle |f_1(x)|/K,& x\notin \mathcal{X}
		\end{cases}
	\end{align}
	where $f_1: \mathbb{R}\setminus{\mathcal{X} \rightarrow \mathbb{R}}$, is given by
	\begin{align}
        f_1 (x) =   \frac{\sin(Kx)}{\sin(x)}  \label{eq:f1}.
        \end{align}
	\item  $\LA$ is periodic with period $\pi$, and symmetric around $\pi/2$.
	
	\item The loss function is bounded as
	\begin{align}\label{eq:lemma:loss:LB}
		0\leq	\LA(x) \leq 1 ,\quad x\in [0,\pi),
		\end{align}
		where $\LA(x)=0$ when $x\in \mathcal{X}^\star$, and $\mathcal{X}^\star$ is defined in~\eqref{def:Xstar}.
	    
	\end{enumerate}
	\begin{proof}
		See the Appendix.
	\end{proof}
	\end{lemma}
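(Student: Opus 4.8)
The plan is to put $\JA(x,y)=\sum_{k=0}^{K-1}\cos(y-2xk)$ into closed form by writing the sum as the real part of a geometric series. Using $\cos\theta=\mathrm{Re}\,\mathrm{e}^{\jmath\theta}$, I would factor out $\mathrm{e}^{\jmath y}$ and evaluate $\sum_{k=0}^{K-1}\mathrm{e}^{-\jmath 2xk}$. Two regimes appear: when $x\in\mathcal{X}$ the common ratio $\mathrm{e}^{-\jmath 2x}$ equals $1$, the sum is $K$, and $\JA(x,y)=K\cos y$; when $x\notin\mathcal{X}$ the geometric formula gives $\sum_{k=0}^{K-1}\mathrm{e}^{-\jmath 2xk}=\mathrm{e}^{-\jmath(K-1)x}\sin(Kx)/\sin(x)$, so that $\JA(x,y)=f_1(x)\cos\!\big(y-(K-1)x\big)$ with $f_1$ as in~\eqref{eq:f1}. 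This single identity is the engine for all three claims.

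For part (i) I would take the infimum over $y\in[0,2\pi)$. As $y$ runs over an interval of length $2\pi$, the argument $y-(K-1)x$ also runs over a full period, so the cosine factor attains every value in $[-1,1]$; hence $\inf_y f_1(x)\cos(y-(K-1)x)=-|f_1(x)|$ regardless of the sign of $f_1(x)$, and likewise $\inf_y K\cos y=-K$ when $x\in\mathcal{X}$. Dividing by $-K$ as in~\eqref{eq:loss:omega} then yields $\LA(x)=1$ on $\mathcal{X}$ and $\LA(x)=|f_1(x)|/K$ off $\mathcal{X}$, which is exactly~\eqref{eq:lemma:loss:cases}. The only point to state carefully is that the half-open interval still contains a minimiser of the cosine, so the infimum is genuinely attained and equals $-|f_1(x)|$.

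Part (ii) reduces to elementary trigonometric identities applied to $|f_1|$. For the period, $\sin\big(K(x+\pi)\big)=(-1)^K\sin(Kx)$ and $\sin(x+\pi)=-\sin(x)$ give $f_1(x+\pi)=(-1)^{K-1}f_1(x)$, and since $\mathcal{X}$ is invariant under $x\mapsto x+\pi$, the absolute value makes $\LA(x+\pi)=\LA(x)$. For symmetry about $\pi/2$, I would verify $\LA(\pi-x)=\LA(x)$ using $\sin\big(K(\pi-x)\big)=-(-1)^K\sin(Kx)$ and $\sin(\pi-x)=\sin(x)$, which again only flips the sign of $f_1$ and is absorbed by $|\cdot|$ (noting $\mathcal{X}$ is also invariant under $x\mapsto\pi-x$, so the two pieces of the case split match up).

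For part (iii) the lower bound $\LA(x)\ge 0$ is immediate from the absolute value (and $\LA=1$ on $\mathcal{X}$). The upper bound is the cleanest consequence of the closed form: from $\mathrm{e}^{-\jmath(K-1)x}f_1(x)=\sum_{k=0}^{K-1}\mathrm{e}^{-\jmath 2xk}$, the triangle inequality gives $|f_1(x)|\le\sum_{k=0}^{K-1}1=K$, hence $\LA(x)\le 1$ on $[0,\pi)$. Finally $\LA(x)=0$ exactly when $f_1(x)=0$ with $x\notin\mathcal{X}$, i.e. $\sin(Kx)=0$ but $\sin(x)\neq 0$; solving $Kx=q\pi$ and excluding $x=q'\pi$ gives precisely $x\in\mathcal{X}^\star$. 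I do not anticipate a serious obstacle here; the only care needed throughout is the clean separation of the degenerate case $x\in\mathcal{X}$ from the geometric-sum case and the bookkeeping of signs inside the absolute value.
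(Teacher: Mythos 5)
Your proof is correct and follows essentially the same route as the paper: the paper likewise reduces $\JA$ to a closed form via the geometric series (splitting the cases $x\in\mathcal{X}$ and $x\notin\mathcal{X}$), takes the infimum over $y$ by noting that the cosine sweeps a full period so the infimum $-|f_1(x)|$ is attained, proves periodicity and symmetry of $|f_1|$ by trigonometric identities, and gets $|f_1(x)|\le K$ from the triangle inequality. The only organizational difference is that the paper proves Lemma~\ref{lemma:loss} and Lemma~\ref{lemma:PL} simultaneously by working with the general sum $\sum_{k=0}^{K-1}(b+akT)\cos(y-2xk)$ (which requires the extra phasor-combination step with $f_2$) and then specializing to $b=1$, $a=0$; your argument is the direct $a=0$ instance of that computation, where those extra steps degenerate.
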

\begin{figure}
	\includegraphics[width= \columnwidth]{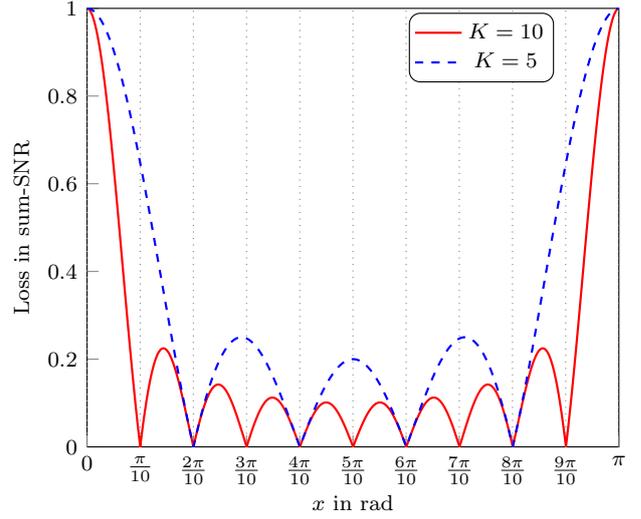}
	\caption{The sum-\gls{SNR} loss function $\LA(x)$, $x= (\ph -\aom )T/2$ when $K=5,10$. }
	\label{fig:L}
\end{figure}
\begin{figure}
	\centering
	\includegraphics[width= \columnwidth,angle=0]{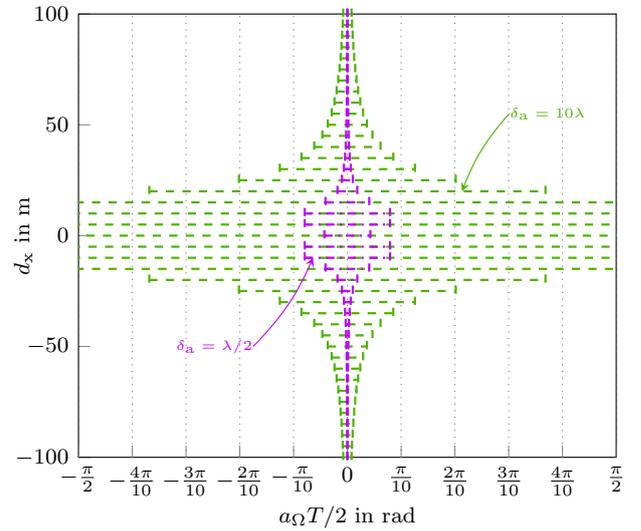}
	\caption{Range of change in $\om$ as a function of $\dx$ when $\dv\in[-60,60]~$km/h, and $\Lw=\pm4$. First order Taylor expansion~\eqref{eq:a:omega:taylor} is used to compute $\aom$.}
	\label{fig:a_om:range}
\end{figure}

The loss function for $K=5$ and $K=10$ is shown in \fig~\ref{fig:L}. 
Since $\LA(x)$ is periodic with period $\pi$, we define
\begin{align}\label{def:Astar}
\Astar\triangleq \bigg\{ q \frac{2\pi}{KT}; q=1,2,\ldots, K-1 \bigg \},
\end{align}
which has $K-1$ elements satisfying $\ph \in \Astar$, $\ph T/2\in \Xstar$. These are all the unique elements in $\Xstar$, since for every $x \in \Xstar$, we can write $x = \ph T/2 + q\pi$, where $\ph \in \Astar$, $q\in \mathbb{Z}$.

Recall that when $\aom =0$, all $\ph \in \Astar$ yield the same optimal sum-\gls{SNR}, and $\LA(x) = 0$.
Consider now an example where $\aom T/2 \in [-\pi/K,\pi/K]$, and $\ph T/2 = \pi/K$ ($\ph \in \Astar$). It follows that $x = (\ph -\aom)T/2 \in [0, 2\pi/K]$, and thus, the worst-case loss due to phase deviation $\aom$ correspond to $\LA(x)=1$, as can be seen from \fig~\ref{fig:L}. On the other hand, when $\ph T/2 = 2\pi/K$, ($\ph \in \Astar$) it follows that $x \in [\pi/K, 3\pi/K]$, and from \fig~\ref{fig:L} we see that the worst-case loss in this case is $\LA(x)\approx 0.25$, and $\LA(x)\approx 0.22$ for $K=5$, and $K=10$, respectively.  
Hence, under channel phase variation, not all phase slopes $\ph \in \Astar$ yield the same loss in sum-\gls{SNR}. Moreover, $\ph T/2 = 2\pi/K$ is more robust than $\ph T/2 = \pi/K$ in mitigating phase deviation effects when $\aom T/2 \in [-\pi/K,\pi/K]$. Therefore, in the following, we propose a design rule to choose the most robust phase slope in $\Astar$ under channel phase shift variation.

\subsection{Phase Slope Design Under Time-Varying $\om$ ($\aom \neq 0$)}
To derive a design rule to choose the most robust phase slope in $\Astar$ under channel phase variation, we start by characterizing the trends of the loss function, $\LA(x)$.
Let $\ph \in \Astar$, and assume that $\aom  \neq 0$. At the points $x=(\ph-\aom )T/2=q\pi $ the loss function is at its maximum value $\LA(x)=1$.
On the other hand, as can be seen in \fig~\ref{fig:L}, in the range $[\pi/K, (K-1)\pi/K ]$ the loss function exhibits $K-2$ lobs. These lobes have decreasing maxima in the range $[\pi/K,\pi/2]$.
That follows since $\sin(x)$ is increasing over the interval, while $|\sin(Kx)|$ is periodic with $\pi/K$ (which is equal to the lobes width). Due to the symmetry around $\pi/2$, the lobes maxima are increasing as we get far from the symmetry point in the range $[\pi/2,(K-1)\pi/K]$.
In addition, the maximum loss within $[\pi/K, (K-1)\pi/K]$ is bounded following
\begin{align}
\LA(x) \leq \frac{1}{K\sin(\pi/K )}, \quad   	x\in \bigg[\frac{\pi}{K},(K-1)\frac{\pi}{K} \bigg],
\end{align}
since $|\sin(Kx)|\leq 1$ and $ \sin(\pi/K) \leq \sin(x) $ where $x \in [\pi/K, (K-1)\pi/K]$. 
This bound is decreasing with $K$ and it quantifies to $\LA(x) \leq 0.34$ for $K=5$, and $\LA(x) \leq 0.32$ for $K=10$.
Substituting with the bound in~\eqref{eq:S:min} we deduce that the sum-\gls{SNR} when $x \in [\pi/K, (K-1)\pi/K ]$ is at worst $-1.8$~dB, and $-1.7$~dB below the zero-loss sum-\gls{SNR} ($\inf_y \SsnrA(x,y) = K$) for $K = 5$, and $K = 10$, respectively.
Hence, in summary, the loss in sum-\gls{SNR} is moderate 
within the interval $[\pi/K, (K-1)\pi/K]$ and it gets lower as we approach the point of symmetry $\pi/2$ of the loss function. On the other hand, the loss in sum-\gls{SNR} is severe in the intervals centered around the points $(\ph-\aom )T/2=q\pi $.

Guided by these trends in the loss function $\LA(x)$, we aim to pick the phase slope $\ph \in \Astar$ that handles a wide range of error due to channel variation $\aom $ without resulting in the most severe loss in sum-\gls{SNR} occurring at $x= (\ph- \aom )T/2 =q\pi$. 
This is equivalent to picking the phase slope that has the largest phase distance from the points with the most severe loss, $x=q\pi$, and which is in turn, the phase slope that has the shortest phase distance from the point $x=\pi/2$. We formally define this as  
\begin{align}\label{eq:design:rule}
	\alpha^\star \triangleq\arg \min_{\ph \in \Astar} |\ph T/2 -\pi/2|.
\end{align}
Besides having the largest phase distance from the points with the most severe loss, the \gls{ACN}  with $\ph^\star$ ensures that the effective phase slope $(\ph^\star - \aom )T/2$ is within the region with moderate loss in sum-\gls{SNR}, $ [\pi/K, (K-1)\pi/K ]$ for the widest range of $\aom $ compared to any other phase slope in $\Astar$. Thus, $\ph^\star$ is the most robust phase slope in $\Astar$ when $\Omega$ is time-varying.
The solutions to~\eqref{eq:design:rule} are given by
\begin{align}\label{eq:alpha:star}
\alpha^\star \in  \begin{cases}
\bigg \{  \displaystyle \frac{K}{2}\frac{2\pi}{KT}  \bigg \} , & K \text{ even }\\
\bigg \{ \displaystyle  \frac{K-1}{2}\frac{2\pi}{KT} ,  \displaystyle \frac{K+1}{2}\frac{2\pi}{KT} \bigg \} ,& K \text{ odd }
\end{cases}
\end{align}

To get more insight into the design rule~\eqref{eq:design:rule} we plot in \fig~\ref{fig:a_om:range} the range of $\aom $ as function of $\dx$ when $\dv \in [-60,60]~$km/h. 
\textbf{(i)} We observe that for $\da=\lambda/2$ the variation of $\aom $ is very limited at medium to large distances. This indicates that the impact of channel phase variation is not severe when the antenna separation is small. 
\textbf{(ii)} For $\da=10\lambda$, we see that $\aom $ has much wider range compared to $\da=\lambda/2$. The interval of $\aom$ increases with the increase of speed difference $|\dv|$ or the decrease of distance $|\dx|$. 
This is an indication that certain choices $\ph\in \Astar$ can have better properties in mitigating the effects of variation of $\Omega $ over a wide range of speed and distances. This supports the proposed design rule in~\eqref{eq:design:rule}. 
For example, assuming $K =10$, then $\alpha^\star T/2=\pi/2$. Following that, from~\fig~\ref{fig:a_om:range} we obtain that when $|\dx|\geq20~$m, $|\aom| T/2< \pi/2$ for $\dv\in[-60,60]$. Thus, when $|\dx|\geq20$, the effective phase slope $(\alpha^\star - \aom )$ satisfies $0<(\alpha^\star - \aom )T/2 <\pi$, implying that $\alpha^\star$ allows us to avoid severe loss in sum-\gls{SNR} for any speed when $|\dx|\geq20~$m.  
On the other hand, using $\ph T/2 =\pi/K$, then $0< (\ph-\aom )T/2 <\pi$ is satisfied for any $\dv\in[-60,60]$, only when $|\dx|\geq 35~$m. Hence, for medium distances below $35~$m, there exists $\dv\in[-60,60]$ for which the \gls{ACN} system with $\ph T/2=\pi/K$ experiences the most severe loss in sum-\gls{SNR}. 
Thus, $\alpha^\star$ is a robust choice that allows us to avoid severe loss in sum-\gls{SNR} over a wide range of distances.

For short distances $|\dx|\leq 15$ $(\da=10\lambda)$, we see in \fig~\ref{fig:a_om:range} that $\aom T/2$ can take any value in $[-\pi/2,\pi/2]$ (recall that $\LA$ is periodic with $\pi$), which implies that for any $\ph\in \Astar$ inclduing $ \ph^\star$, there exists $\dv\in[-60,60]$ for which the most severe loss is experienced. In other words, no choice of phase slope is better than other in mitigating the effects of variation of $\om$ over the full speed range $\dv\in[-60,60]$ at short distances $|\dx|\leq 15$ $(\da=10\lambda)$. 
However, we recall that approximating $\om$ as an affine function~\eqref{eq:omega:app:affine} is not very accurate over the full range of speed difference at low distances. Therefore, a numerical quantification of the system performance at such short distances is shown in the numerical results section.

\section{Pathloss Variation Effects}\label{sec:PL}
In this section, we investigate the performance of \gls{ACN} under time-varying \gls{PL} following the same steps as in the previous section.
\subsection{Pathloss Model}
The Pathloss is typically modeled following a power law~\cite{AbbasPL2015}, and so does its inverse, the path gain, which will be the one explicitly used in the analysis to follow. A generic model of the path gain~\cite{AbbasPL2015} is given by
\begin{align}\label{eq:PL:generic:w:sahdowing}
	\Apl(t) = \Apl_0 \times \bigg(\frac{\Dplz}{\Dpl(t)}\bigg)^{n_{\mathrm{e}}} 10^{ -X_{\sigma}(\Dpl)/10}, ~\Dpl \geq \Dplz,
\end{align}
where $\Dpl(t) = \sqrt{ (\dx + \dv t)^2+\Lw^2 }$, is the distance between the \gls{Tx} and \gls{Rx}, $n_{\mathrm{e}}$ is the path loss exponent, $\Dplz$ is a reference distance with path gain $\Apl_0 $, and $X_{\sigma}(\Dpl)$ is a zero-mean Gaussian random process with standard deviation $\sigma_{\textrm{SH}}$ corresponding to shadowing (large scale fading). 
Shadowing is a spatially correlated process with autocorrelation $\mathbb{E} \{X_{\sigma} (\Dpl)X_{\sigma}(\Dpl + \Delta \Dpl)\}$,  that can be modeled using a decaying exponential function with parameter $d_{\mathrm{c}}$~\cite{Gudmundson1991}. The decorrelation distance $d_{\mathrm{c}}$, represents the distance difference at which the autocorrelation is equal to $\mathrm{e}^{-1}$. In context of highway scenario, it is reported to be  $ 23.3 \leq d_{\mathrm{c}} \leq 32.5~$m in~\cite{AbbasPL2015} and $d_{\mathrm{c}}=25~$m in WINNER+B1 pathloss model~\cite{3GPP_rel14_36p885}. Taking this into account, and for simplicity, we assume that shadowing is a block-type fading over the duration of $KT$s,
\begin{align}
X_{\sigma}  \triangleq X_{\sigma} (\Dpl |_{t=(K-1)T/2}) ,~ 0 \leq t \leq KT.\nonumber
\end{align}

Then, assuming that\footnote{When $\Dpl<\Dplz$ the model~\eqref{eq:PL:generic:w:sahdowing} is not valid. In such case, the path gain can be assumed to be equal to that at $\Dplz$ as done, for example, in WINNER+B1 model~\cite[\tab~A.1.4-1]{3GPP_rel14_36p885}.} $\Dpl(0)>\Dplz$ and $ \Dpl\big((K-1)T\big)>\Dplz$, we can approximate the average path gain following
\begin{align}
\overline{A}(t)&= \mathbb{E} \{ A(t) \} = \mu_{X}\Apl_0  \bigg( \frac{\Dplz}{\Dpl(t)}\bigg )^{n_{\mathrm{e}}}\approx \bpl +\apl t, \label{eq:Euler:PL}  
\end{align}
where $\mu_{X}= \mathbb{E} \{ 10^{-X_{\sigma}/10}\}  =\mathrm{e}^{ (\ln(10)\sigma_{\textrm{SH}})^2/200}$.
Assuming that the variation of path gain is negligible over a packet duration $\Tm \ll T$, we reach
\begin{align*}
  \overline{A}(t) \approx   \overline{A}(kT) = \bpl +\apl k T, ~ kT\leq t\leq kT+\Tm.
\end{align*}
The approximating affine function parameters can be computed using either first-order Taylor expansion or using the \gls{LS} method. Since $\overline{A}(kT)>0$, the coefficients need to satisfy the condition 
$\bpl+\apl kT>0$, $k=0, 1,\ldots, K-1$, for the approximation to be valid. When Taylor expansion at $t=t_0$ is applied, the coefficients are given by
\begin{align}
	\apl &= -\mu_{X}\Apl_0  \Dplz^{n_{\mathrm{e}}} \frac{n_{\mathrm{e}}\dv\times (\dx + \dv t_0)  }{\Dpl^{n_{\mathrm{e}}+2}(t_0)}, \label{eq:apl}
\end{align}
and $\bpl =\overline{\Apl} (t_0) -  \apl ~t_0$. 


\subsection{Sum-\gls{SNR}}
To study the effects of the variation of the average \gls{PL} on the \gls{ACN}/\gls{ABN} system, we assume that the two vehicles are moving in the same lane $\Lw=0$~m. This implies that the \gls{AOA} and the channel phase shift do not change over $KT$~s ($\aom =0$). 
Furthermore, we assume that the antennas have isotropic patterns (i.e., even if $\phi$ is time-varying, the antenna responses are fixed). 

The received power of the $k^{\text{th}}$ packet can be expressed as $P_{\textrm{r}}=\mathbb{E}\{|a(t)x(t)|^2\} =\overline{\Apl}(kT)P_{\mathrm{t}}$, where $\overline{\Apl}(kT)=\mathbb{E}\{|a(t)|^2\}$, and $P_{\mathrm{t}}=\mathbb{E}\{|x(t)|^2|\}$ is the transmitted power. Then, the \gls{SNR} of the $k^{\text{th}}$ packet~\eqref{eq:SNRk} is expressed as
\begin{align}
    \gamma_k =\frac{P_{\mathrm{t}}}{L\sigma^2_{\mathrm{n}}}  \overline{\Apl}(kT) \bigg|\sum_{l=0}^{L-1} \mathrm{e}^{-\jmath(\ps_l-\ph_l kT)} \bigg|^2,
\end{align}
since $g_l(\phi)=1$, $l=0,1$.
Then, normalizing the \gls{SNR} with respect to $P_{\mathrm{t}}/\sigma^2_{\mathrm{n}}$ and summing over $k$, we find that 
\begin{align}
\SsnrPL(x, y)
    &= \frac{1}{L} \sum_{k=0}^{K-1} \overline{A}(kT) \big(2 + 2\cos( y - 2 x k) \big)\nonumber \\
	&\approx \frac{1}{L} \sum_{k=0}^{K-1} (\bpl+\apl k T) \big(2 + 2\cos( y - 2 x k) \big)\nonumber\\
	& = \cJ K \bigg(1 +  \frac{\JPL(x, y)}{\cJ K} \bigg), \label{eq:S:S0+JPL}
\end{align}
where the approximation follows from~\eqref{eq:Euler:PL}, and where $x$ and $y$ are as defined in~\eqref{def:y:x} ($x=\ph T/2$, since $\aom =0$), $ \cJ  = \big(\bpl+\apl T (K-1)/2\big)$, and
\begin{align}
	\JPL(x, y)&=\sum_{k=0}^{K-1} (\bpl + \apl  k T )  \cos\big( y - 2x k\big).  \label{eq:JPL}
\end{align}

As done earlier, we define the loss function as
\begin{align}\label{eq:Loss:PL}
\LPL (x) \triangleq - \inf_{y\in[0,2\pi)} \frac{\JPL(x, y)}{\cJ K},
\end{align}
where the worst-case sum-\gls{SNR} is given by
\begin{align}\label{eq:S:OmLpl}
    \inf_{y\in [0,2\pi)} \SsnrPL(x, y) = \cJ K (1 - \LPL (x) ).
\end{align}
We then state certain properties of the loss function in the following lemma.

\begin{lemma}\label{lemma:PL} Let $\LPL (x) $ be as defined in~\eqref{eq:Loss:PL}, $x\in \mathbb{R}$, and $(\bpl+\apl k T)>0$, $k=0,1,\ldots,K-1$, then
\begin{enumerate}

\item The loss function is given by
\begin{align}
\LPL (x) = \begin{cases}
1, & x\in\mathcal{X}\\
\sqrt{ \cJ^2 f_1^2(x) + \cJJ ^2 f_2^2(x)} / (\cJ K), & x\notin\mathcal{X}
\end{cases}
\end{align}
where $ \cJ = \big(\bpl+\apl T (K-1)/2\big)$, $\cJJ =\apl T/2$, $\mathcal{X} $ is defined in~\eqref{def:Xstar}, $f_1$ is defined in~\eqref{eq:f1}, and  $f_2:\mathbb{R}\setminus{\mathcal{X}}\rightarrow \mathbb{R}$, is given by
\begin{align}
f_2(x)  &=   \frac{K\cos(Kx)}{\sin(x)}     -\frac{\sin(Kx)}{\sin(x)}\cot(x)  \label{eq:f2}. 
\end{align}

\item The loss function is periodic with period $\pi$ and symmetric around $\pi/2$.

\item $\LPL$ can be bounded as
	\begin{align}\label{eq:lem:PL:upper:lower}
	0 < \LPL (x) ,~ \apl\neq 0, \quad  \LPL (x)\leq 1.
	\end{align}
	\item  If $x\in \mathcal{X}^\star$ then
	\begin{align}\label{eq:lem:PL:xstar}
	\LPL (x) = \bigg | \frac{\cJJ }{\cJ \sin(x)} \bigg|.
	\end{align}
	
\item The loss function is bounded when $x\notin \mathcal{X}$ as
\begin{align}\label{eq:lem:PL:Bounds}
  \frac{|f_1(x)| }{K} \leq	\LPL (x)  \leq \frac{\sqrt{(K-1)^2f_1^2(x)+f_2^2(x)}}{K(K-1)}.
\end{align}

	\end{enumerate}
	
\end{lemma}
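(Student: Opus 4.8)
The plan is to collapse the minimization over $y$ into a single modulus and then evaluate that modulus in closed form with a Dirichlet-kernel computation, generalizing the argument behind Lemma~\ref{lemma:loss} to a linearly weighted sum.

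First I would write $\JPL$ in~\eqref{eq:JPL} as the real part of a single phasor. Since $\cos(y-2xk) = \operatorname{Re}\bigl(\mathrm{e}^{\jmath y}\mathrm{e}^{-\jmath 2xk}\bigr)$, we have $\JPL(x,y) = \operatorname{Re}\bigl(\mathrm{e}^{\jmath y} W(x)\bigr)$ with $W(x) \triangleq \sum_{k=0}^{K-1}(\bpl+\apl kT)\,\mathrm{e}^{-\jmath 2xk}$. Because $\operatorname{Re}(\mathrm{e}^{\jmath y}W) = |W|\cos(y+\arg W)$ sweeps $[-|W|,|W|]$ as $y$ ranges over $[0,2\pi)$, the infimum in~\eqref{eq:Loss:PL} is $-|W(x)|$, giving $\LPL(x) = |W(x)|/(\cJ K)$. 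The case $x\in\mathcal{X}$ is then immediate: there $\mathrm{e}^{-\jmath 2xk}=1$, so $W = \sum_k(\bpl+\apl kT) = \cJ K$ and $\LPL=1$.

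The heart of the proof is evaluating $|W(x)|$ for $x\notin\mathcal{X}$. I would split $W = \bpl S_0 + \apl T\, S_1$ with $S_0 = \sum_k \mathrm{e}^{-\jmath 2xk}$ and $S_1 = \sum_k k\,\mathrm{e}^{-\jmath 2xk}$. The geometric sum gives $S_0 = \mathrm{e}^{-\jmath(K-1)x}\sin(Kx)/\sin(x) = \mathrm{e}^{-\jmath(K-1)x}f_1(x)$, and $S_1$ follows by differentiation in $x$ via $dS_0/dx = -\jmath 2 S_1$. The observation that unlocks the stated form is that $f_2$ in~\eqref{eq:f2} is exactly $f_1'$; differentiating $f_1$ and collecting terms yields $W(x) = \mathrm{e}^{-\jmath(K-1)x}\bigl(\cJ f_1(x) + \jmath\,\cJJ f_2(x)\bigr)$, so that $|W|^2 = \cJ^2 f_1^2 + \cJJ^2 f_2^2$. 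This differentiation, together with the bookkeeping needed to make the phase factor $\mathrm{e}^{-\jmath(K-1)x}$ and the coefficients $\cJ,\cJJ$ appear cleanly, is the step I expect to be the main obstacle, since a stray sign or factor here corrupts every remaining part.

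The other parts follow from this closed form and elementary estimates. For periodicity and symmetry I would argue on $\JPL$ directly: $x\mapsto x+\pi$ shifts each cosine argument by $2\pi k$ and so leaves $\JPL$ (hence $\LPL$) invariant, while $x\mapsto \pi-x$ merely flips the sign of $2xk$, which the minimizing substitution $y\mapsto-y$ absorbs, yielding symmetry about $\pi/2$. The bound $\LPL\le1$ comes from $|\JPL|\le\sum_k(\bpl+\apl kT)=\cJ K$, using coefficient positivity, and strict positivity for $\apl\neq0$ holds because $f_1$ and $f_2$ never vanish together ($f_1(x)=0$ forces $\sin(Kx)=0$, hence $\cos(Kx)=\pm1$ and $f_2(x)\neq0$). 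The value on $\mathcal{X}^\star$ is the substitution $\sin(Kx)=0,\ \cos(Kx)=\pm1$ into the closed form. Finally, in~\eqref{eq:lem:PL:Bounds} the lower bound drops the nonnegative term $\cJJ^2 f_2^2$, while the upper bound reduces, after squaring, to $|\cJJ|(K-1)\le\cJ$; I would establish this inequality from the hypothesis $\bpl+\apl kT>0$ evaluated at the endpoints $k=0$ and $k=K-1$, treating $\apl\ge0$ and $\apl<0$ separately.
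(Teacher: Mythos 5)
Your proposal is correct, and its computational core is the same as the paper's: you write $\JPL(x,y)=\mathrm{Re}\{\mathrm{e}^{\jmath y}W(x)\}$ with $W(x)=\sum_{k}(\bpl+\apl kT)\mathrm{e}^{-\jmath 2xk}$, evaluate $W$ by the geometric series together with the key identity $f_2=f_1'$, and read off $\LPL(x)=|W(x)|/(\cJ K)=\sqrt{\cJ^2f_1^2(x)+\cJJ^2f_2^2(x)}/(\cJ K)$; the paper does exactly this but phrases the final step as an amplitude--phase decomposition ($\cJ f_1=R\cos\varphi$, $-\cJJ f_2=R\sin\varphi$, so $J=R\cos(y-(K-1)x-\varphi)$) instead of taking a complex modulus, a purely cosmetic difference. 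Where you genuinely diverge is in parts (ii) and (v). For periodicity and symmetry, the paper proves in two separate appendix lemmas, via trigonometric expansions, that $f_1^2$ and $f_2^2$ are $\pi$-periodic and symmetric about $\pi/2$, and then feeds these into the closed form; your change-of-variables argument directly on $\JPL$ (the shift $x\mapsto x+\pi$ changes each cosine argument by $2\pi k$, and $x\mapsto\pi-x$ composed with $y\mapsto -y$ leaves the infimum unchanged since $y\mapsto -y \bmod 2\pi$ is a bijection of $[0,2\pi)$) bypasses those lemmas entirely and is shorter. For the bounds in part (v), the paper substitutes $w=\apl/\bpl$, writes $(\cJJ/\cJ)^2$ as $g(w)/(K-1)^2$ with $g(w)=\big(w/(w+2A)\big)^2$, $A=1/((K-1)T)$, and runs a monotonicity analysis showing $\inf_{w>-A} g=0$ and $\sup_{w>-A} g=1$; you instead observe that after squaring, the upper bound is equivalent to $|\cJJ|(K-1)\le\cJ$, which follows immediately from the positivity hypothesis at the endpoints $k=0$ (giving $\bpl>0$) and $k=K-1$ (giving $\bpl+\apl(K-1)T>0$). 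Both are valid; yours is more elementary and makes transparent that the upper bound encodes exactly the extreme admissible tilt of the weights, while the paper's parametrization additionally shows how $\LPL$ interpolates between the two bounds as a function of $\apl/\bpl$ (minimum at $\apl=0$, approaching the upper bound as $w\to -A$ or $w\to\infty$).
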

\begin{proof}
	See the Appendix.
\end{proof}
\begin{figure}
	\includegraphics[width= \columnwidth]{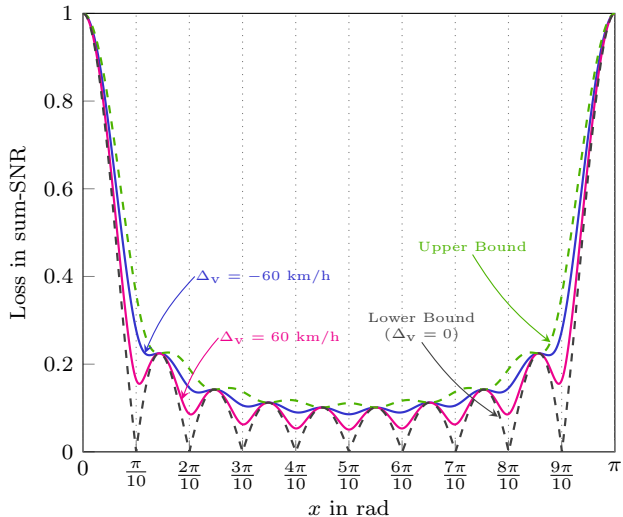}
	\caption{The sum-\gls{SNR} loss function under \gls{PL} variation, $\LPL(x)$, for $K=10$, $\dx= 30$~m, $\Lw = 0$~m, and $\dv=\pm 60 ~$ km/h. The upper and lower bounds~\eqref{eq:lem:PL:Bounds} of the loss function are shown in dashed lines.}
	\label{fig:lossFunction:PL}
\end{figure}

\subsection{Phase Slope Design Under Time-Varying \gls{PL} ($\apl\neq 0$)}
We recall that when the \gls{PL} is approximated as constant ($\apl = 0$), the phase slopes $ x = \ph T/2 \in \Xstar$ ensure identical optimal performance corresponding to $\inf_y \SsnrPL(x,y)= \cJ K$, and $\LPL (x) = 0$.
Now, when the \gls{PL} is time-varying, $\apl \neq 0$, we see from Lemma~\ref{lemma:PL}~\eqref{eq:lem:PL:xstar} that the loss in sum-\gls{SNR} $\LPL (x) \neq 0 $ for $x= \ph T/2 \in \Xstar$, and it is not identical for all the elements of $\Xstar$. 
Since $\LPL$ is periodic with $\pi$, there exists $K-1$ unique elements in $\Xstar$, which are represented in $\Astar$~\eqref{def:Astar}. 
Following that, we can deduce that the phase slope that minimizes the loss in sum-\gls{SNR} under \gls{PL} variation among $\Astar$ is also the solution to~\eqref{eq:design:rule}. That is,
\begin{align}\label{eq:rulePL:ruleOmega}
\arg \min_{\ph\in \Astar} \LPL (\ph T/2) = \arg \min_{\ph \in \Astar} |\ph T/2 -\pi/2|.
\end{align}
This follows from~\eqref{eq:lem:PL:xstar} since $1/|\sin (\ph T/2)|$ is minimized by the phase slope satisfying the right-hand side of~\eqref{eq:rulePL:ruleOmega}.
These results can be observed in \fig~\ref{fig:lossFunction:PL} where the loss function,
is plotted for $\dx=30~$m and $\dv=\pm 60~ $km/h. 
The WINNER+B1~\cite{3GPP_rel14_36p885,WINNER+} pathloss model is used. Assuming a carrier frequency $f_{\textrm{c}}=5.9~$GHz, and antenna heights of $1.5~$m, the path gain between the \gls{Tx} and \gls{Rx} when $\Dpl\leq 177~$m (this distance depends on $f_{\textrm{c}}$ and the antennas heights) can be modeled using~\eqref{eq:PL:generic:w:sahdowing}, with the parameters, $\Dplz = 3~$m, $\Apl_0= 10^{5.32}$, $n_{\mathrm{e}} = 2.27$, and $\sigma_{\textrm{SH}}=3$.

From \fig~\ref{fig:lossFunction:PL}, and using the bounds in~\eqref{eq:lem:PL:Bounds}, we see that the loss in sum-\gls{SNR} is low for all $\ph \in \Astar$ ($x = \ph T/2 = q \pi/K$). In particular, substituting by the upper bound~\eqref{eq:lem:PL:Bounds} of the loss function in~\eqref{eq:S:OmLpl} we can deduce that the phase slopes $\ph = q 2\pi/KT$, where $q =1,\ldots, K/2$, ($\ph \in \mathcal{A}^\star$), respectively achieve a sum-\gls{SNR} that is at worst, $-1.94$, $-0.91$, $-0.64$, $-0.54$, and $-0.51$~dB, lower than the reference zero-loss sum-\gls{SNR}, $\inf_y \SsnrPL (x,y) = \cJ K$. Note that due to the symmetry of the loss function with respect to $\pi/2$, the phase slopes $\ph= q 2\pi/KT$, $q = K/2+1,\ldots, K-1$, achieve identical sum-\gls{SNR} to their symmetric counterparts.

Thus unlike the effects of variation in $\om$, when \gls{PL} is time-varying over the duration of $KT~$s at medium to large distances, the loss in sum-\gls{SNR} is much lower than the maximum loss $\LPL (x) = 1$, and thus \gls{PL} time variation cannot cause severe loss in sum-\gls{SNR}.

As $\dx$ increases, the loss in sum-\gls{SNR} decreases for all phase slopes $\ph\in \Astar$. This can be observed from Lemma~\ref{lemma:PL}~\eqref{eq:lem:PL:xstar}, where the loss is proportional to $\cJJ/\cJ$, which is given by
\begin{align}
    \bigg|\frac{\cJJ}{\cJ}\bigg| = \frac{n_{\mathrm{e}} |\dv|}{|\dx + \dv t_0|}T/2, \quad (\Lw = 0),
\end{align}
assuming the use of~\eqref{eq:apl} with $t_0 = (K-1)T/2$. As $\dx$ becomes large, the loss becomes negligible for all phase slopes $\ph\in \Astar$.

\subsection{Combined \gls{PL} and $\om$ Variation}\label{sec:combined:PL+phi}
Consider now the scenario when the two \glspl{VU} are on different lanes $\Lw\neq 0$. Then, besides \gls{PL}, $\om$ is also time-varying over $KT$~s, and can be approximated using~\eqref{eq:omega:app:affine}, where $\aom\neq 0$. Despite that $\phi$ varies too, the isotropic antenna responses are fixed, and do not affect the system. In this scenario, the sum-\gls{SNR} and the loss function are still modeled by~\eqref{eq:S:S0+JPL} and~\eqref{eq:Loss:PL} respectively, where $x=(\ph - \aom)T/2$. 
Therefore, at medium to long distances, variation in \gls{PL} changes the shape of the loss function, in the sense that, for any $x$, the loss in sum-\gls{SNR} is greater than the loss under non-varying \gls{PL}, $\LPL(x)\geq \LPL(x)|_{\dv=0}$, as can be seen in \fig~\ref{fig:lossFunction:PL}. On other hand, variation in $\om$ introduces deviation to the \gls{ACN}/\gls{ABN} phase slope $\ph$ with a factor of $\aom$.
Using the bounds~\eqref{eq:lem:PL:Bounds} and from \fig~\ref{fig:lossFunction:PL}, we see that the loss function under \gls{PL} variation has similar trends to the loss function under time-varying $\om$~\eqref{eq:lemma:loss:cases} (which is the lower bound). That is, the maximum loss occurs at the points $x=(\ph - \aom)T/2=q\pi$, and the loss function has decreasing maxima within $[\pi/K,\pi/2]$, and increasing maxima in $[\pi/2,(K-1)\pi/K]$.
Thus, under both \gls{PL} and $\om$ variation, the design rule~\eqref{eq:design:rule} still leads to the most robust phase slope in $\Astar$. In other words, the phase slope~\eqref{eq:design:rule} allows us to avoid the most severe loss in sum-\gls{SNR} over the widest range of $\aom$ when both $\om$ and \gls{PL} are time-varying over $KT$~s.

\section{Effects of \gls{AOA} Variation With Non-Isotropic Antennas}\label{sec:phi}

So far, we assumed that the antenna patterns are isotropic. Now, consider non-isotropic antennas with far-field functions $g_0(\phi)$, $g_1(\phi)$.
When the two reference vehicles are on different lanes ($\Lw \neq 0$), the \gls{AOA} varies depending on the speed and initial distance between the \gls{Tx} and \gls{Rx}. That introduces a variation in both the gain and the phase of the far-field functions of antennas.
In the following, we investigate the effects of these variations on the performance of \gls{ACN} at medium to large distances.
As done in the previous sections, we study the effects of variation of antenna far-field functions while neglecting the effects of the two other quantities of interest ($\Omega$ and \gls{PL}), by assuming that they are both non-varying over $KT$~s. That is, $\Omega(t)\approx \bom$, $\aom =0$ in~\eqref{eq:omega:app:affine}, and $\overline{A}(t)\approx \bpl $, $\apl=0$ in~\eqref{eq:Euler:PL}, when $0 \leq t\leq KT$.
Then, we express the \gls{SNR} of the $k^{\text{th}}$ packet~\eqref{eq:SNRk} in this case as
\begin{align}
    \gamma_k =\frac{ P_{\mathrm{t}}\bpl}{L\sigma^2_{\mathrm{n}}}  \bigg|\sum_{l=0}^{L-1} |g_l(\phi_k)|\mathrm{e}^{-\jmath(\ps_l(kT)-\ph_l kT)} \bigg|^2,
\end{align}
where the \gls{AOA} is assumed to be constant over a packet duration $\phi(t) = \phi_k$,  $kT\leq t\leq kT+\Tm$, $k=0, 1,\ldots,K-1$, and so is $\ps_l$, $l=0,1$.
Recalling that $\ph_0=0$, and $\ph = \ph_1 \in \mathbb{R}$, the normalized sum-\gls{SNR} can be expressed as
\begin{align}
\frac{\sigma^2_{\mathrm{n}}}{ P_{\mathrm{t}}\bpl } \sum_{k=0}^{K-1} \gamma_k&=
 \sum_{k=0}^{K-1} \sum_{l=0}^{1} \frac{|g_l(\phi_k)|^2}{2}  + \sum_{k=0}^{K-1} |g_0(\phi_k)g_1(\phi_k)| \nonumber\\
 \times\cos\big( &\psi_1(kT)-\psi_0(kT) - \ph kT  \big).\label{eq:Sphi}
\end{align}
From~\eqref{def:psi} it follows that at $ kT \leq t\leq kT+\Tm$
\begin{align*}
    \psi_1(kT) - \psi_0(kT) &= \Omega(kT) -\phase{g_1 (\phi_k)} + \phase{g_0(\phi_k)} - \beta \nonumber\\
&= \bom  -\phase{g_1 (\phi_k)} + \phase{g_0(\phi_k)}  - \beta,
\end{align*}
since $\Omega_0(t) = 0$, $\Omega_1(t)=\Omega(t)=\bom$ when $0\leq t\leq KT$, $ \beta_0=0$ and $\beta_1 = \beta \in [0,2\pi)$.
Unlike the case when antennas were assumed to be isotropic, we see here that the phase responses of non-isotropic antennas change with $\phi_k$, and introduce a time-varying phase shift. To analyze its effect we apply an affine approximation taking into account the slow change of the \gls{AOA} at medium to large distances (e.g., at $\dx=30~$m, $\phi$ changes by $9\deg/1~$s at most, when $\dv\in[-60,60]~$km/h), following
\begin{align}\label{eq:approx:DeltaPhase}
  \phase{g_0(\phi_k)} -\phase{g_1 (\phi_k)}  \approx \bgph + \agph kT.
\end{align}
The coefficients $\bgph$ and $\agph$ depend on the antenna patterns used and how fast the \gls{AOA} changes. They can be obtained using the \gls{LS} method or using the first-order Taylor series when an analytical formula for the antenna far-field functions is available.
Adopting the above approximation, we can express the sum-\gls{SNR}~\eqref{eq:Sphi} as
\begin{align}
    \SsnrPH(x,y) 
    &=\sum_{k=0}^{K-1} \sum_{l=0}^{1} \frac{|g_l(\phi_k)|^2}{2}  + \JPH(x,y), \nonumber
\end{align}
where
\begin{align}
   \JPH(x,y)= \sum_{k=0}^{K-1} &|g_0(\phi_k)g_1(\phi_k)| \cos( y - 2 x k),
\end{align}
 $x = (\ph-\agph )T/2$ and $y= (\bom +\bgph -\beta)$.
 The \gls{ACN} initial phase shift $\beta$ can take any value in $[0,2\pi)$, and so does $y $. Therefore, we assess the performance according to worst-case sum-\gls{SNR}
 \begin{align}\label{eq:Sphi:inf}
     \inf_{y\in[0,2\pi)} \SsnrPH (x,y) = \sum_{k=0}^{K-1} \sum_{l=0}^{1} \frac{|g_l(\phi_k)|^2}{2}  + \inf_{y\in[0,2\pi)}  \JPH(x,y).
 \end{align}
 
 
 When developing the \gls{ACN} system~\cite{ACN,lehocine2021abn} under worst-case propagation assumptions, the  signal direction $\phi_k$ was assumed to be negligibly varying over $KT$~s, and coinciding with worst-case \gls{AOA} defined as 
\begin{align}\label{eq:phi:min}
\phim= \arg \inf_{\phi\in[0,2\pi)} ~ \sum_{l=0}^{1} \frac{|g_l(\phi)|^2}{2}.
\end{align}
 Thus, we can straightforwardly establish that when the \gls{AOA} is varying, we achieve a gain in sum-\gls{SNR} with respect to the worst-case scenario (i.e., $\phi_k=\phim$, $\forall k$) since the first term in~\eqref{eq:Sphi:inf} satisfies
\begin{align}\label{eq:phi:var:gain}
		\sum_{k=0}^{K-1} \sum_{l=0}^{1} \frac{|g_l(\phi_k)|^2}{2} &\geq
	 \sum_{k=0}^{K-1} \sum_{l=0}^{1} \frac{|g_l(\phim)|^2}{2}\nonumber\\&= K \sum_{l=0}^{1} \frac{|g_l(\phim)|^2}{2}.
\end{align}
On the other hand, under worst-case assumptions of non-varying $\phi_k$ for $k=0,\ldots, K-1$ (which implies that $\agph=0$), we know that for $x=\ph T/2\in \Xstar$, $\inf_y \JPH(x,y) = 0$. To investigate if this holds when the \gls{AOA} is time-varying, and taking into account the slow change rate of $\phi$ at medium to large distances, we approximate using the \gls{LS} or first-order Taylor series
\begin{align}\label{eq:phi:lin:model}
	|g_0(\phi_k)g_1(\phi_k)| \approx \aphiz+ \aphi kT ,
\end{align}
where $\aphiz+ \aphi kT >0$, $k=0,1,\ldots,K-1$.
Following that,
\begin{align}\label{eq:J:phi:app}
\JPH(x,y) & =\sum_{k=0}^{K-1} |g_0(\phi_k)g_1(\phi_k)| \cos( y - 2 x k)\nonumber\\ &\approx\sum_{k=0}^{K-1} (\aphiz + \aphi kT )\cos( y - 2 x k). 
\end{align}
From~\eqref{eq:J:phi:app} we see that variation in $\phi$ has identical effects to \gls{PL} variation effects captured by $\JPL$~\eqref{eq:JPL}, and therefore they can be represented using the same loss function. Consequently, using the left-hand statement of Lemma~\ref{lemma:PL}~\eqref{eq:lem:PL:upper:lower}, and the definition~\eqref{eq:Loss:PL}, we deduce that
\begin{align}
    \inf_{y \in [0,2\pi)}\JPH(x,y)<0, \quad \aphi\neq 0, x\in \mathbb{R} \nonumber.
\end{align}
Hence, when the \gls{AOA} is time-varying we also experience a loss in sum-\gls{SNR}~\eqref{eq:Sphi:inf} (unlike the case when $\phi$ is fixed and where $\inf_y \JPH(x,y)=0$). The loss in sum-\gls{SNR} is due to two factors. The first factor is the variation of the antenna gains $|g_0(\phi_k)g_1(\phi_k)|$, which results in a loss even if $\agph=0$, $x = \ph T/2  \in \Xstar$. This is identical to the effects of  
\gls{PL} variation. The second factor is due to the variation of the phase response of the antennas~\eqref{eq:approx:DeltaPhase}, which shifts the effective phase slope of \gls{ACN} to $x = (\ph -\agph) T/2$, and introduces a loss in sum-\gls{SNR} that that has been studied in \secR~\ref{sec:omega} under time variation of $\Omega$.

In summary, the effects of time-varying \gls{AOA}, with affine approximation of phase and magnitude responses of non-isotropic antennas, have the same model as the combined \gls{PL} and $\Omega$ time variation model that has been discussed in~\secR~\ref{sec:combined:PL+phi}. Therefore, we can readily conclude that the phase slope in $\Astar$ that ensure a robust performance under the effects of time-varying far-field functions is given by~\eqref{eq:design:rule}.
We recall that these conclusions apply at medium to large distances, where the affine approximations used in evaluating the sum-\gls{SNR} loss term, $\inf_y \JPH(x,y) $, are expected to be accurate.
At short distances, despite that the characterizations of the loss term made in this section may not be very accurate, the gain term in sum-\gls{SNR} indicated by~\eqref{eq:phi:var:gain} is still achieved. Furthermore, this gain is expected to be more noticeable at short rather than large distances, since the \gls{AOA} change rate is much higher for the former case. These aspects are highlighted using an example of an antenna pattern in the numerical results section.

\color{black}
\begin{figure}
	\includegraphics[width= \columnwidth]{/SNR_om_d30_10lam_zoomed_dB.pdf}
	\caption{Normalized worst-case sum-\gls{SNR} under $\om$ variation $\big(1-\LA(x)\big)$ in dB, for $K=10$, $\da=10\lambda$, $\dx=30~$m. ($\phBar=\frac{2\pi}{KT}$)}
	\label{fig:sum-SNR:K10}
\end{figure}

\section{Numerical Results}
In this section, we present numerical results corresponding to the normalized sum-\gls{SNR} of a $1\times 2$ \gls{ACN} system (or equivalently $2\times 1$ \gls{ABN}) when $\om$, \gls{PL}, and $\phi$ vary over the duration of a burst of $K$ consecutive \glspl{CAM}, $KT$~s.
Throughout this section, we assume that the \gls{CAM} broadcast period $T=0.1~$s, and that the maximum tolerable \gls{AoI} is proportional to $K=10$ packets.

\subsection{Sum-\gls{SNR} Under Variation of $\om$}
In the following, we visualize the sum-\gls{SNR} for different choices of $\ph\in \Astar$, taking into account the time variation of $\om$ only. That is, the antennas are assumed to be isotropic, and the change in \gls{PL} is assumed negligible over $KT$~s. The coefficients for the affine approximation of $\om$ in~\eqref{eq:omega:app:affine} are computed based on Taylor series expansion~\eqref{eq:a:omega:taylor}. The \gls{LS}-based affine approximation was found to lead to comparable results, and it is therefore omitted in the figures to follow in this subsection.

We recall that the worst-case normalized sum-\gls{SNR}~\eqref{eq:S:min}, is given by
\begin{align}\label{eq:NR:Som}
  \inf_{y\in [0,2\pi)}\SsnrA(x,y)= K\big(1- \LA(x)\big),  
\end{align}
where $x=(\ph-\aom)T/2$.
Since $\aom$~\eqref{eq:a:omega:taylor} depends on $\dv,\dx,\da$, and $\Lw$, we visualize the sum-\gls{SNR} as a function of $\dv$ for a fixed initial distance $\dx$, and a fixed antenna separation $\da$. Furthermore, since the effect of lateral distance results in $\aom|_{-\Lw}=-\aom|_{\Lw}$ (follows from~\eqref{eq:omega:full}, or~\eqref{eq:a:omega:taylor}), the sum-\gls{SNR} is minimized over $\Lw \in \{-4,4\}$.

\begin{figure}
	\includegraphics[width= \columnwidth]{/SNR_om_d100_10lam_Hlam_dB.pdf}
	\caption{Normalized worst-case sum-\gls{SNR} under $\om$ variation $\big(1-\LA(x)\big)$ in dB, for $K=10$, $\da=10\lambda$ at $\dx=100~$m, and  $\da=\lambda/2$ at $\dx=30~$m. ($\phBar=\frac{2\pi}{KT}$)}
	\label{fig:sum-SNR:d100:1lam:d30}
\end{figure}


In \fig~\ref{fig:sum-SNR:K10}, we plot the sum-\gls{SNR}, after further normalization with respect to $K$, in dB, at a medium distance $\dx = 30$~m, when the antenna separation is $\da=10\lambda$. 
We recall that the loss function under variation of $\om$, $\LA$ is symmetric around $\pi/2$ and therefore, we visualize the results only for $\ph\in \Astar$, $\ph = q \frac{2\pi}{KT}$, $q= 1,\ldots, 5$. From \fig~\ref{fig:sum-SNR:K10}, we see that the choice of $\ph$ satisfying~\eqref{eq:design:rule}, i.e., $\alpha^\star= 5 \times 2\pi/KT$, exhibits a robust performance to channel phase variation
with a low loss over the full range of $\dv\in[-60,60]~$km/h. 
 Besides $\alpha^\star$, the choice of $\ph=4\times 2\pi/KT$, or $\ph= 3\times 2\pi/KT$ results in comparably low loss too, which follows since the phase slopes have, respectively, the second and the third largest phase distances from the points of severe sum-\gls{SNR} loss $x=q\pi$. The phase slope having the shortest phase distance from these points, $\phBar= 2\pi/KT$, exhibits the highest loss, and it is the least robust phase slope in $\Astar$.
In \fig~\ref{fig:sum-SNR:K10}, we also plot the sum-\gls{SNR} using the exact value of $\om$~\eqref{eq:omega:full}, for the most and the least robust choices of phase slopes in $\Astar$  ($\alpha^\star$ and $\phBar$). We see that the sum-\gls{SNR} using the linearized model is matching the sum-\gls{SNR} using the exact value of $\om$ for most speeds. 
The maximum loss experienced by $\alpha^\star$ when antennas are $\da=10 \lambda$ apart, is approximately $0.5~$dB (according to the exact sum-\gls{SNR}).
If the antenna separation is smaller $\da=\lambda/2$, then the \gls{ACN} experiences a significantly lower loss in sum-\gls{SNR} for both $\alpha^\star$ and $\phBar$ as can be seen in \fig~\ref{fig:sum-SNR:d100:1lam:d30}. In fact, the loss in sum-\gls{SNR} of the \gls{ACN} with $\da=\lambda/2$ at $\dx=30$~m is comparable to that of the \gls{ACN} with $\da=10\lambda$ at a large distance $\dx = 100~$m. This implies that implementing \gls{ACN} systems with low antenna separation limits the sum-\gls{SNR} loss incurred due to time variation of $\om$. 
Note that for both the cases shown in \fig~\ref{fig:sum-SNR:d100:1lam:d30}, $\alpha^\star$ has an advantage compared to $\phBar$, yet the loss in sum-\gls{SNR} is not significant for both phase slopes.

\begin{figure}
	\includegraphics[width=\columnwidth]{/SNR_om_d10_dB.pdf}
	\caption{Normalized worst-case sum-\gls{SNR} under $\om$ variation $\big(1-\LA(x)\big)$ in dB, for $K=10$, $\da=10\lambda$, $\dx=10~$m. ($\phBar=\frac{2\pi}{KT}$)}
	\label{fig:sum-SNR:d10}
\end{figure}

In \secR~\ref{sec:sub:omega:var} it was explained that modeling $\om$ as an affine function holds under medium to large distances. Therefore, we are interested in evaluating sum-\gls{SNR} under short distances using the exact value of $\om$. In \fig~\ref{fig:sum-SNR:d10} we show the sum-\gls{SNR} at $\dx=10~$m, when the antenna separation is $\da=10\lambda$. As expected we see that at such a short distance the linearized model is accurate only for relatively low speeds $\dv \in [-15,35]$km/h. We note that for negative relative speeds the variation in $\om$ is faster than for positive speeds since, for the former, the geometries between the \gls{Tx} and \gls{Rx} change more dramatically due to decreasing distance between the \glspl{VU} and potential overtaking movement. For the latter, the distance between the \glspl{VU} is increasing and thus the change in geometries is less severe. 
At larger speed differences, the change in $\om$ is much faster and it cannot be approximated using a first-order polynomial. Following this at the points $x=(\ph -\aom)T/2=q\pi$, as it is the case for the \gls{ACN} with $\alpha^\star$ at $\dv\approx -30,-58~$km/h, the loss in sum-\gls{SNR} is not as severe as it is predicted using the linearized model. This is also observed at $\dv=-41,-46$km/h for $\phBar$.  
Thus, when the change in $\om$ is very fast and cannot be accurately approximated using a first-order polynomial, the \gls{ACN}/\gls{ABN} system does not exhibit a severe loss in sum-\gls{SNR} as predicted using the linearized model. The system exhibits better performance. 
We note that at this short distance $\alpha^\star$ shows advantage compared to $\phBar$ mostly when $\dv>-15$~km/h.  

To highlight more the advantage of a proper choice of $\ph$, in \fig~\ref{fig:sum-SNR:d10:lambda} we show the sum-\gls{SNR} at $\dx=10~$m, when $\da=\lambda/2$. We can observe that for such small antenna separation the linearized model is valid over a larger range of speed even at such a short distance. Moreover, the phase slope $\alpha^\star$ sustains a performance with low loss over the full speed range compared to $\phBar$.

\begin{figure}
	\includegraphics[width=\columnwidth]{/SNR_om_d10_Hlam_dB.pdf}
	\caption{Normalized worst-case sum-\gls{SNR} under $\om$ variation $\big(1-\LA(x)\big)$ in dB, for $K=10$, $\da=\lambda/2$, $\dx=10~$m. ($\phBar=\frac{2\pi}{KT}$) }
	\label{fig:sum-SNR:d10:lambda}
\end{figure}

\subsection{Sum-\gls{SNR} Under \gls{PL} Variation}
To show the effects of \gls{PL} variation and the performance of the different phase slopes, we plot in \fig~\ref{fig:sum-SNR:PL} the sum-\gls{SNR}~\eqref{eq:S:OmLpl} for the worst-case $y$ as a function of $\dv$ at a fixed medium distance $\dx=30~$m. Both vehicles are on the same lane, $\Lw=0~$m, which results in non-varying $\om$, ($\aom =0$). Antennas are assumed to be isotropic. 
We normalize the sum-\gls{SNR} with respect to the average path gain at $\dx=30~$m, 
and with respect to $K$, such that it is equal to $0$~dB at $\dv=0$~km/h. 
We show the results for the affine approximation of the path gain, which is modeled according to WINNER+B1 model, based on both Taylor series~\eqref{eq:apl} and the \gls{LS} method, since the latter exhibited more accurate results in this case. 
From the figure, we observe that the loss due to \gls{PL} variation is not significant for all phase slopes $\ph\in \Astar$. In particular, the loss is at most around $0.4$~dB, and $1.5$~dB, for $\ph^\star T/2= \pi/2$, and $\phBar T/2=\pi/K$, respectively (based on exact curves at $\dv=-60$~km/h). The loss of the remaining phase slopes within $\Astar$ lies in between.
As discussed in \secR~\ref{sec:PL}, unlike the effects of variation of channel phase, variation in \gls{PL} does not introduce a shift to the \gls{ACN}/\gls{ABN} phase slope, and this explains the non-significant loss in sum-\gls{SNR}. 
\begin{figure}
		\includegraphics[width=\columnwidth]{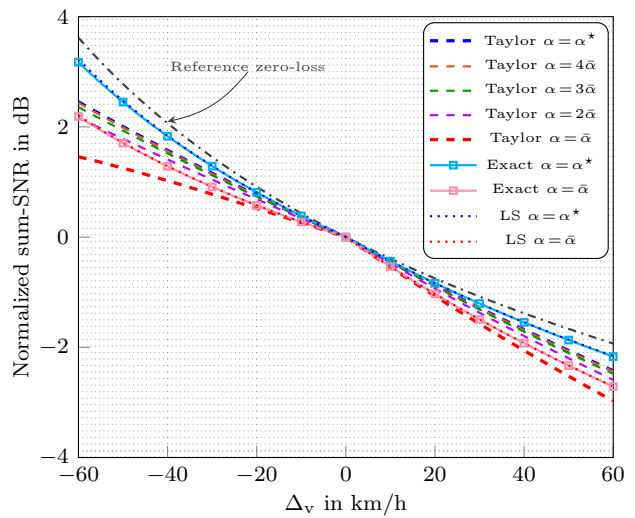}
	\caption{Normalized worst-case sum-\gls{SNR} under \gls{PL} variation in dB, for $K=10$, $\dx=30~$m, $\Lw=0~$m. The \gls{PL} model parameters correspond to WINNER+B1 model, and they are $\Dplz= 3~$m, $\Apl_0= 10^{5.32}$, $n_{\mathrm{e}}=2.27$, and $\sigma_{\textrm{SH}}=3$. ($\phBar=\frac{2\pi}{KT}$)}
	\label{fig:sum-SNR:PL}
\end{figure}

\begin{figure*}
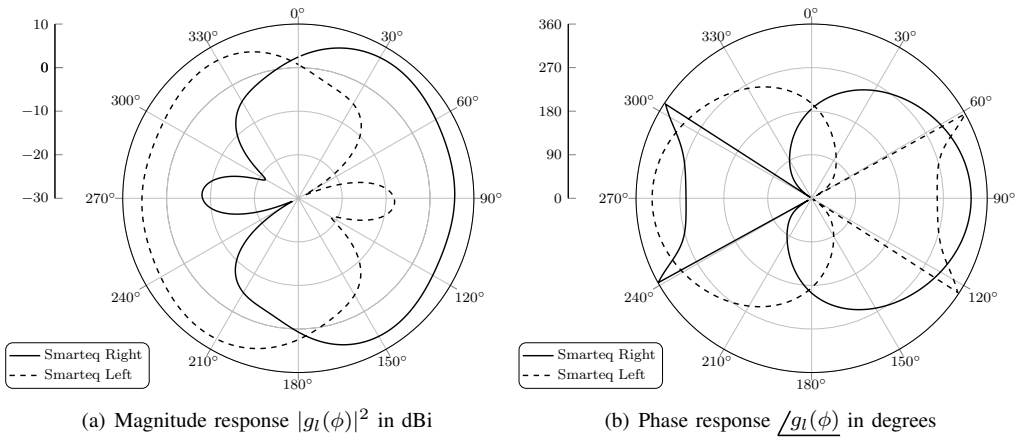

\centering
\subfigure[Magnitude response $|g_l(\phi)|^2$ in dBi]{\includegraphics[width=0.8\columnwidth]{/Smarteq_gain.pdf}}
	\subfigure[Phase response $\phase{g_l(\phi)}$ in degrees]{\includegraphics[width=0.8\columnwidth]{/Smarteq_ph.pdf}}
	\caption{Back-to-back patch antenna pattern. The antennas are designed by Smarteq\protect\footnotemark for Vehicular applications.}
	\label{fig:antenna}
\end{figure*}

\subsection{Sum-\gls{SNR} Under Antenna Response Variation}
In \secR~\ref{sec:phi}, we have seen that as $\phi$ varies, the phase and amplitude responses of antennas vary too. 
In \fig~\ref{fig:sum-SNR:Phi:smarteq} we show the sum-\gls{SNR}~\eqref{eq:Sphi:inf} under these effects when the antennas employed by the \gls{ACN}/\gls{ABN} are the ones shown in \fig~\ref{fig:antenna}.  
 The sum-\gls{SNR} is plotted as a function of $\dv$ for a fixed $\dx$, and for the worst-case $y\in[0,2\pi)$, and worst-case $\Lw \in \{-4,4\}$. 
The received signal is assumed to coincide with the worst-case \gls{AOA}, $\phim$~\eqref{eq:phi:min} at time $t=0$. Moreover, \gls{PL} and $\Omega$ are assumed to be approximately constant over $KT$~s. The affine coefficients for the approximation of far-field functions phase~\eqref{eq:approx:DeltaPhase} and amplitude~\eqref{eq:phi:lin:model} are computed using the \gls{LS}. For consistency, we normalize the sum-\gls{SNR} with respect to $( |g_0(\phim)|^2 + |g_1(\phim)|^2)$, and with respect to $K$, such that the sum-\gls{SNR} equals $0$~dB at $\dv=0$~km/h.

From \fig~\ref{fig:sum-SNR:Phi:smarteq} we see that at a medium distance $\dx=30~$m, the loss in sum-\gls{SNR} is low for $\phBar$, and negligible for $\alpha^\star$ and the remaining phase slopes in $ \Astar$. Despite that the phase slopes are shifted by $\agph$, resulting in $x=(\ph-\agph)T/2$, the loss in sum-\gls{SNR} is not as large as it was the case under time variation of $\om$ at $\dx=30~$m and $\da=10\lambda$. 
The sum-\gls{SNR} gain~\eqref{eq:phi:var:gain} due to variation of $\phi$ is negligible at $\dx=30~$m for these antenna patterns, as can be inferred from the reference curve in \fig~\ref{fig:sum-SNR:Phi:smarteq} ($\dx=30$~m). However, at $\dx=10~$m, this gain is higher.  
On the other hand, the loss in sum-\gls{SNR} due to $\inf_y \JPH(x,y)$ is also higher at $\dx=10$~m compared to $\dx=30$~m, and it is most noticeable for $\phBar$.
This loss is mostly attributed to the phase deviation $\agph$, which causes dips in sum-\gls{SNR} e.g., at $\dv \approx 38$~km/h, when $x=(\phBar-\agph)T/2=q\pi$. This can be concluded by looking at the sum-\gls{SNR} for $\phBar$, when the phase response of the antennas is assumed constant over $KT$~s, ($\agph=0$).
Unlike $\phBar$, the phase slopes $\alpha^\star$, and $\ph\in\{4\times 2\pi/KT,3\times 2\pi/KT\} $, effectively mitigate these effects and exhibit a low loss (approximately less than $0.5$~dB) over the full range of $\dv$.

Despite that the linearized model cannot be generally claimed to be accurate at short distances, we see from \fig~\ref{fig:sum-SNR:Phi:smarteq} that it does predict the actual behavior of the sum-\gls{SNR} curves, and it allows us to correctly select a robust phase slope within $\Astar$~\eqref{eq:design:rule} in mitigating the loss in sum-\gls{SNR} due to the variation of antenna responses.

 
\footnotetext{Smarteq Wireless AB, Sweden is a company specializing in
	antenna design and development for vehicle industry and others.}
\begin{figure}
	\includegraphics[width=\columnwidth]{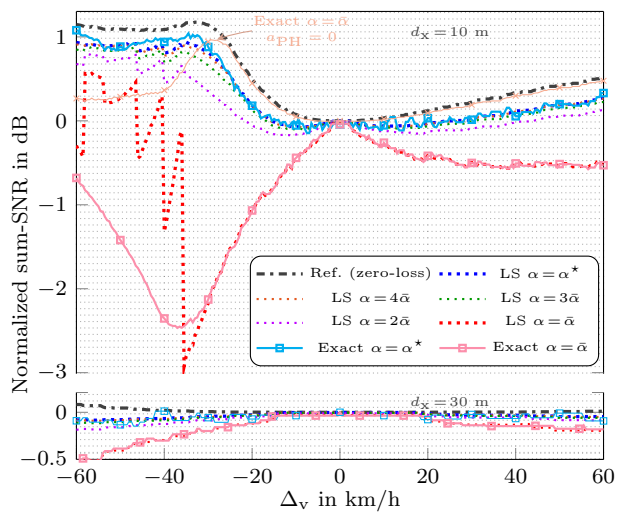}
	\caption{Normalized worst-case sum-\gls{SNR} under antenna response variation in dB, for $K=10$, $\dx = 30~$m, and $\dx = 10~$m. 
	($\phBar=\frac{2\pi}{KT}$) }
	\label{fig:sum-SNR:Phi:smarteq}
\end{figure}
\begin{figure}
	\includegraphics[width=\columnwidth]{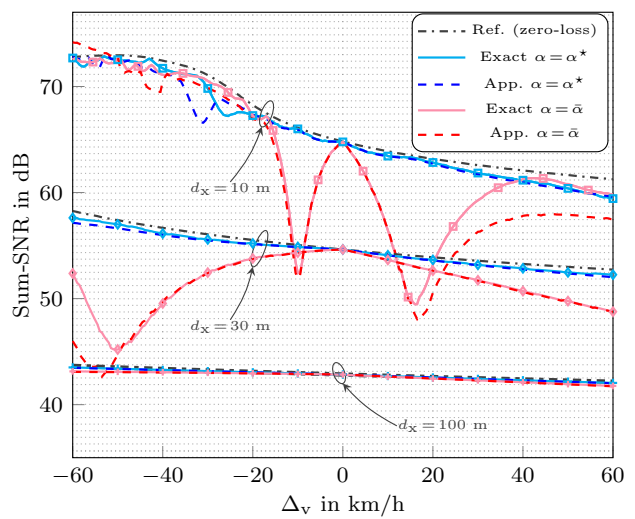}
	\caption{Worst-case (unnormalized) sum-\gls{SNR} under variation of $\Omega$, \gls{PL}, and \gls{AOA} in dB, for $K=10$, $\da=10\lambda$,  $P_{\mathrm{t}}=23$~dBm, and $\sigma_{n}^2= -95$~dBm. The linear approximation coefficients of $\Omega$ and \gls{PL} are computed using Taylor expansion, and those of antenna responses are computed using the \gls{LS}.  ($\phBar=\frac{2\pi}{KT}$) }
	\label{fig:sum-SNR:allthree}
\end{figure}
\subsection{Combined Effects of the Dominant Path Variation}
In this section, we show the sum-\gls{SNR} taking into account the combined effects of time-varying $\Omega$, \gls{PL}, and $\phi$. The \gls{PL} is modeled using WINNER+B1 channel model, and the receiver antennas have the patterns shown in \fig~\ref{fig:antenna}. 
The received signal is assumed to coincide with the worst-case \gls{AOA}, $\phim$ at time $t=0$. To be able to compare the sum-\gls{SNR} for different distances, we plot it without normalization. The transmitted power is set to $P_{\mathrm{t}}=23$~dBm, and the noise power at the receiver is set to $\sigma_{\mathrm{n}}^2=-95$~dBm, which correspond to the commercial IEEE802.11p radio values (at $6$~Mbit/s)~\cite{Cohda}. In \fig~\ref{fig:sum-SNR:allthree} we show the sum-\gls{SNR} for worst-case $y\in[0,2\pi)$, and worst-case $\Lw\in\{-4,4\}$, at $\dx\in\{10, 30, 100\}$.  

	From the figure, we see that the observations made earlier when analyzing the effects of the three quantities separately are valid when taking into account their combined effects.
	Namely, \gls{ACN} is impacted by time variation at medium and short distances. The choice of $\ph^\star$~\eqref{eq:design:rule} yields a robust performance under time variation of the dominant path. Despite that $\bar{\ph}=2\pi/KT$ yields optimal performance under non-time-varying conditions (i.e., under worst-case assumptions), it can exhibit severe loss in sum-\gls{SNR}, e.g., around $13$~dB at $\dx=30$~m, and around $16$~dB at $\dx=10$~m, when the dominant path is time-varying. Other choices of $\ph \in \mathcal{A}^\star$ are more robust than $\bar{\ph}$ and less robust than $\ph^\star$. 
	We recall that the antenna separation is set to $\da=10\lambda\approx 0.5$~m. Using lower $\da$ decreases the loss for all $\ph \in \mathcal{A}^\star$ and vice versa.
	In this paper, motivated by the low \gls{PL} at short distances we focused our analysis of \gls{ACN}/\gls{ABN} at medium to large distances. In \fig~\ref{fig:sum-SNR:allthree} we see a visual validation of this approach where even under severe loss in sum-\gls{SNR}, the system performance at $\dx=10$~m is higher than the performance at $\dx=100$~m and at $\dx=30$~m for the major part of the speed interval. 
	
\glsreset{PL}
\section{Conclusion and Future Works}


\Gls{ACN}~\cite{ACN} and \gls{ABN}~\cite{lehocine2021abn} are robust multiple antenna schemes that maximize the sum-\gls{SNR} of $K$ consecutive broadcast, periodic packets, under a worst-case propagation scenario, of a single dominant path with non-varying \gls{AOA}, \gls{PL}, and phase shift between antennas $\om$, over the duration of $K$ packets. In this work, we investigated the performance of a $1\times 2$ \gls{ACN} ($2\times 1$ \gls{ABN}) when the three quantities of the dominant component are time-varying instead. The main findings of this work follow.
\begin{itemize}
	\item The phase slopes that yield identical optimal sum-\gls{SNR} when the \gls{AOA}, the path-loss, and $\Omega$ are non-varying, yield different sum-\gls{SNR} when any of theses three quantities vary over the duration of $K$ packets. 
	\item Time variation of $\om$ results in shifting the \gls{ACN} phase slope, and incurs a loss in sum-\gls{SNR} that depends on the phase slope used, how fast the time variation of $\om$ is, and the antenna separation. In particular, the smaller the antenna separation, the less susceptible \gls{ACN} is to these variations.
	\item To mitigate the loss in sum-\gls{SNR} under variation of $\om$ at medium to large distances, we proposed a design rule~\eqref{eq:design:rule} that yields a phase slope that is robust against time variations and optimal under time-invariant conditions.
	\item Time variation of the \gls{PL} does not induce a shift to the \gls{ACN} phase slope, but it attenuates the sum-\gls{SNR} achieved using any phase slope compared to the sum-\gls{SNR} achieved under time-invariant conditions. The sum-\gls{SNR} loss incurred is minor compared to the loss incurred due to time variation of $\om$. The derived design rule~\eqref{eq:design:rule} is found to yield a phase slope that is robust against variation of \gls{PL} at medium to large distances, as well.
	\item Variation of \gls{AOA} induces a variation of phase and amplitude of the antenna far-field functions, which has equivalent effects to the combined $\om$ and \gls{PL} time variation effects, with an extent that depends on the antennas employed. The design rule~\eqref{eq:design:rule} yields a robust phase slope in $\Astar$ in this case too.
\end{itemize}
In this study, we investigated the performance of \gls{ACN}/\gls{ABN}  after relaxing the worst-case propagation scenario assumed when designing them. The next step is to study the performance of \gls{ACN}/\gls{ABN} under rich multipath propagation, which represents a full relaxation of the worst-case propagation assumption and is left for future work. 



	\begin{appendix}\label{appendix}
	\subsection{Preliminaries}
	 Here we present preliminary statements and lemmas that are used in the proof of Lemma~\ref{lemma:loss} and Lemma~\ref{lemma:PL}.
	 
	 Define the sets
	 \begin{align}\label{eq:appen:X}
	  	\mathcal{X}&\triangleq\{ q\pi, q\in \mathbb{Z} \},\\
	  \mathcal{X}^\star&\triangleq\{q\pi/K, q\in \mathbb{Z}\}\setminus \mathcal{X}.\label{eq:appen:Xstar}
	 \end{align}
	It follows that 
	\begin{align}
		x\in \mathcal{X} &\iff x + \pi \in \mathcal{X},\label{eq:appen:basic:per:X}\\
		(\pi/2-x) \in \mathcal{X}  &\iff (\pi/2+ x) \in \mathcal{X}.\label{eq:appen:basic:sym:X}
	\end{align}
	The first statement follows since if $x\in \mathcal{X}$ then $x= m\pi$, $m\in \mathbb{Z}$ and hence $x + \pi = (m + 1) \pi \in \mathcal{X}$. Similarly, if $(x+\pi)\in \mathcal{X}$ then $x+\pi = m\pi$, $m\in \mathbb{Z}$, and hence $x= (m-1)\pi \in \mathcal{X}$. To show~\eqref{eq:appen:basic:sym:X}, let 
	$(\pi/2\pm x)\in \mathcal{X}$, then $\pi/2 \pm x= m\pi$, where $m\in \mathbb{Z}$. Then multiplying both sides by $-1$ and adding $\pi$, we obtain $\pi/2 \mp x =(1-m)\pi $, hence $(\pi/2 \mp x)\in \mathcal{X}$.
	
	For fixed positive integer $K>1$ define the functions $f_1: x\in \mathbb{R}\setminus\mathcal{X}\rightarrow \mathbb{R}$, and $f_2: x\in \mathbb{R}\setminus\mathcal{X}\rightarrow \mathbb{R}$ following
		\begin{align}
		f_1 (x) &\triangleq   \frac{\sin(Kx)}{\sin(x)},  \label{eq:appen:f1} \\
		f_2(x)  &\triangleq   \frac{K\cos(Kx)}{\sin(x)}     -\frac{\sin(Kx)}{\sin(x)}\cot(x) . \label{eq:appen:f2}
		\end{align}
	\begin{lemma}\label{lem:appen:f1} Let $f_1$ be as defined in~\eqref{eq:appen:f1}, then
		$|f_1(x)|$ and $f_1^2(x)$ are periodic with period $\pi$, and symmetric around $\pi/2$.
	\end{lemma}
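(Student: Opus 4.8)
The plan is to reduce both assertions to two elementary sign computations on $f_1$, after which taking absolute values or squaring eliminates the offending signs. Before touching the identities I would first check that the two maps involved, $x\mapsto x+\pi$ and $x\mapsto \pi-x$, preserve the domain $\mathbb{R}\setminus\mathcal{X}$ on which $f_1$ is defined. Preservation under $x\mapsto x+\pi$ is exactly the equivalence~\eqref{eq:appen:basic:per:X}, while preservation under $x\mapsto\pi-x$ follows because $\pi-x=m\pi$ would force $x=(1-m)\pi\in\mathcal{X}$; so $f_1(x+\pi)$ and $f_1(\pi-x)$ are well defined whenever $f_1(x)$ is.

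For the periodicity claim I would use $\sin\big(K(x+\pi)\big)=\sin(Kx)\cos(K\pi)=(-1)^K\sin(Kx)$ together with $\sin(x+\pi)=-\sin x$, which give
\[
f_1(x+\pi)=\frac{(-1)^K\sin(Kx)}{-\sin x}=(-1)^{K+1}f_1(x).
\]
Passing to $|\cdot|$ yields $|f_1(x+\pi)|=|f_1(x)|$ and squaring yields $f_1^2(x+\pi)=f_1^2(x)$, so both $|f_1|$ and $f_1^2$ have period $\pi$.

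For symmetry about $\pi/2$ I would first observe that a function $g$ is symmetric about $\pi/2$ precisely when $g(\pi-x)=g(x)$ for all admissible $x$: setting $x=\pi/2+t$ turns this into $g(\pi/2-t)=g(\pi/2+t)$, which is the defining statement. I therefore only need to evaluate $f_1(\pi-x)$. Using $\sin\big(K(\pi-x)\big)=-\cos(K\pi)\sin(Kx)=(-1)^{K+1}\sin(Kx)$ and $\sin(\pi-x)=\sin x$ gives
\[
f_1(\pi-x)=(-1)^{K+1}f_1(x),
\]
and once more the parity factor vanishes under $|\cdot|$ and $(\cdot)^2$, so $|f_1(\pi-x)|=|f_1(x)|$ and $f_1^2(\pi-x)=f_1^2(x)$, i.e., both functions are symmetric about $\pi/2$.

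I do not expect a genuine obstacle here: the whole argument is a pair of angle-addition identities. The only points meriting a little care are correctly bookkeeping the parity factor $(-1)^K$ (noting that both the shift $x\mapsto x+\pi$ and the reflection $x\mapsto\pi-x$ happen to contribute the \emph{same} factor $(-1)^{K+1}$, so the two parts share one computation) and recognizing that the symmetry claim is the single identity $g(x)=g(\pi-x)$, which avoids any case split on the parity of $K$.
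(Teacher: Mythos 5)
Your proof is correct, and on the symmetry claim it takes a cleaner route than the paper. The paper proves symmetry by expanding $f_1(\pi/2+x)$ and $f_1(\pi/2-x)$ via angle-addition at the point $K\pi/2$, which forces a case split: for $K$ even, $\sin(K\pi/2)=0$ and both sides reduce to $|\sin(Kx)/\cos(x)|$; for $K$ odd, $\cos(K\pi/2)=0$ and both reduce to $|\cos(Kx)/\cos(x)|$. You instead recast symmetry about $\pi/2$ as the single identity $g(\pi-x)=g(x)$ and compute the reflection $f_1(\pi-x)=(-1)^{K+1}f_1(x)$; since the reflection about $\pi$ only ever produces $\sin(K\pi)=0$ and $\cos(K\pi)=(-1)^K$, the parity of $K$ enters only through a sign that is killed by $|\cdot|$ and $(\cdot)^2$, and no case split is needed. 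The periodicity part is essentially the paper's computation (the paper writes $-\sin(Kx)\cos(K\pi)/\sin(x)$ where you write $(-1)^{K+1}f_1(x)$). Your explicit check that $x\mapsto x+\pi$ and $x\mapsto\pi-x$ preserve the domain $\mathbb{R}\setminus\mathcal{X}$ is a point of rigor the paper leaves implicit; it is a worthwhile addition, and your observation that the shift and the reflection contribute the \emph{same} factor $(-1)^{K+1}$ lets the two halves share one computation, which the paper's presentation does not exploit.
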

\begin{proof}
	From~\eqref{eq:appen:f1}, and $\sin(K\pi)=0$ it follows that 
	\begin{align}
		f_1 (x+\pi) = \frac{\sin(Kx+K\pi)}{\sin(x+\pi)} = -\frac{\sin(Kx)\cos(K\pi) }{\sin(x)}. \nonumber
	\end{align}
	Hence $|f_1(x+\pi)|=|f_1(x)|$ and $|f_1^2(x+\pi)|=|f_1^2(x)|$, and the periodicity property follows.
	To prove the symmetry property, we show that
	\begin{align}
		|f_1 (\pi/2+ x)|= |f_1 (\pi/2- x)|.
	\end{align}
	We have $|f_1 (\pi/2+ x)|$ is given by
	\begin{align}
\bigg|\frac{\sin(K\frac{\pi}{2})\cos(Kx)+\cos(K\frac{\pi}{2})\sin(Kx)}{\cos (x)}\bigg|. \label{eq:appen:sym:1}
	\end{align}
	On the other hand, $|f_1 (\pi/2- x)|$ is given by
		\begin{align} \bigg|\frac{\sin(K\frac{\pi}{2})\cos(Kx)-\cos(K\frac{\pi}{2})\sin(Kx)}{\cos (x)}\bigg|.\label{eq:appen:sym:2}
		\end{align}
	If $K$ is even, it follows from~\eqref{eq:appen:sym:1} and~\eqref{eq:appen:sym:2} that $|f_1 (\pi/2+ x)|= |\sin(Kx)/\cos(x)|=|f_1 (\pi/2- x)|$. If $K$ is odd, it follows from the same equations that $|f_1 (\pi/2+ x)|= |\cos(Kx)/\cos(x)|=|f_1 (\pi/2- x)|$. Hence $|f_1(x)|$ is symmetric around $\pi/2$. Since $f_1$ is a real-valued function, $f_1^2(x)=|f_1(x)|^2$, and thus $f_1^2(x)$ is symmetric around $\pi/2$ too, and the lemma follows.
\end{proof}

\begin{lemma}\label{lem:appen:f2} Let $f_2$ be as defined in~\eqref{eq:appen:f2}, then
	$f_2^2(x)$ is periodic with period $\pi$, and symmetric around $\pi/2$.
\end{lemma}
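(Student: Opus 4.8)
The plan is to exploit the observation that $f_2$ is exactly the derivative of $f_1$ on the punctured domain $\mathbb{R}\setminus\mathcal{X}$, and then to transport the periodicity and symmetry of $f_1$ already established in Lemma~\ref{lem:appen:f1} through differentiation. Concretely, first I would differentiate $f_1(x)=\sin(Kx)/\sin(x)$ from~\eqref{eq:appen:f1} by the quotient rule, obtaining
\begin{align}
f_1'(x) = \frac{K\cos(Kx)\sin(x) - \sin(Kx)\cos(x)}{\sin^2(x)} = \frac{K\cos(Kx)}{\sin(x)} - \frac{\sin(Kx)}{\sin(x)}\cot(x),
\end{align}
which is precisely $f_2(x)$ as defined in~\eqref{eq:appen:f2}. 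Since $\sin(x)\neq 0$ on $\mathbb{R}\setminus\mathcal{X}$, $f_1$ is differentiable there and the identity $f_2=f_1'$ holds throughout the domain.

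For periodicity, I would recall from the proof of Lemma~\ref{lem:appen:f1} the signed relation $f_1(x+\pi)=-\cos(K\pi)f_1(x)=(-1)^{K+1}f_1(x)$, which is what that proof computes before taking absolute values. Differentiating both sides in $x$ gives $f_2(x+\pi)=(-1)^{K+1}f_2(x)$, hence $f_2^2(x+\pi)=f_2^2(x)$, establishing period $\pi$.

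For the symmetry about $\pi/2$, I would use the signed forms that underlie the proof of Lemma~\ref{lem:appen:f1}: for $K$ even one has $f_1(\pi/2+x)=-f_1(\pi/2-x)$, while for $K$ odd one has $f_1(\pi/2+x)=f_1(\pi/2-x)$. These follow from~\eqref{eq:appen:sym:1} and~\eqref{eq:appen:sym:2} by keeping track of the sign of the $\cos(K\pi/2)\sin(Kx)$ term rather than discarding it under the modulus. Differentiating each identity with respect to $x$ and applying the chain rule to the right-hand side, which contributes a factor $-1$ from the argument $\pi/2-x$, yields in the even case $f_2(\pi/2+x)=f_2(\pi/2-x)$ and in the odd case $f_2(\pi/2+x)=-f_2(\pi/2-x)$. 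In either case $|f_2(\pi/2+x)|=|f_2(\pi/2-x)|$, so $f_2^2$ is symmetric about $\pi/2$, and the lemma follows.

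The main obstacle is essentially bookkeeping: Lemma~\ref{lem:appen:f1} is stated only for $|f_1|$ and $f_1^2$, so to differentiate I must resurrect the signed relations from inside its proof, since differentiating an absolute value directly would be awkward at the sign changes. If one prefers a fully self-contained argument, the alternative is to bypass the derivative observation entirely and verify both properties by direct trigonometric manipulation of~\eqref{eq:appen:f2}, expanding $\cos(K(x+\pi))$, $\sin(K(x+\pi))$ and $\cot(x+\pi)$ for periodicity, and $\cos(K(\pi/2\pm x))$, $\sin(K(\pi/2\pm x))$ for symmetry, splitting into the parities of $K$. This mirrors the computation in the proof of Lemma~\ref{lem:appen:f1} but is more tedious, so I expect the derivative route to be the cleaner of the two.
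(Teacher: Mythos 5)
Your proof is correct, but it takes a genuinely different route from the paper's. The paper proves the lemma by direct trigonometric manipulation of $f_2$ itself: it expands $f_2(x+\pi)$ via angle-addition formulas and $\sin(K\pi)=0$ to obtain $f_2(x+\pi)=-\cos(K\pi)\big(K\cos(Kx)/\sin(x)-\sin(Kx)\cos(x)/\sin^2(x)\big)$, whence $f_2^2(x+\pi)=f_2^2(x)$; for the symmetry it splits into $K$ even (where $\sin(K\pi/2)=0$ and one gets $f_2(\pi/2+x)=f_2(\pi/2-x)$) and $K$ odd (where $\cos(K\pi/2)=0$ and $f_2(\pi/2-x)=-f_2(\pi/2+x)$) --- in other words, exactly the ``more tedious'' alternative you sketch at the end of your proposal. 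Your main route instead exploits $f_2=f_1'$, which is indeed correct (and is in fact invoked by the paper itself in the joint proof of Lemma~\ref{lemma:loss} and Lemma~\ref{lemma:PL}, when differentiating the geometric sum), and transports the signed identities $f_1(x+\pi)=(-1)^{K+1}f_1(x)$ and $f_1(\pi/2+x)=\mp f_1(\pi/2-x)$ (sign $-$ for $K$ even, $+$ for $K$ odd) through differentiation; your chain-rule bookkeeping is right and reproduces precisely the signed relations the paper derives for $f_2$ directly. The resurrection of those signed identities is legitimate, since they are exactly what the computations inside the proof of Lemma~\ref{lem:appen:f1} yield before absolute values are taken, and differentiating them is valid because they hold on open subsets of $\mathbb{R}\setminus\mathcal{X}$ where all functions involved are differentiable. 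What your approach buys is economy: the expansions of $f_2(\pi/2\pm x)$, the messiest part of the paper's proof, are replaced by one quotient-rule computation plus the chain rule. What it costs is modularity: Lemma~\ref{lem:appen:f1} as stated only controls $|f_1|$ and $f_1^2$, so your argument cannot cite that lemma as a black box and must reach into its proof --- a limitation you correctly acknowledge.
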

\begin{proof}
Employing trigonometric identities and noting that $\sin(K\pi)=0$, we have
\begin{align}
f_2(x+\pi) &= \frac{K \cos(Kx)\cos(K\pi)}{\sin(x+\pi)}\nonumber\\
&\quad  -\frac{\sin(Kx)\cos(K\pi)}{\sin^2(x+\pi)}\cos(x+\pi)\nonumber\\
&= -\cos(K\pi) \nonumber \\
&\quad \times\bigg( \frac{K \cos(Kx)}{\sin(x)}- \frac{\sin(Kx)}{\sin^2(x)}\cos(x)\bigg).
\end{align}
Hence, $f_2^2(x+\pi)= (-\cos(K\pi))^2 f_2^2(x)= f_2^2(x)$, and thus $f_2^2(x)$ is periodic with period $\pi$.
To prove the symmetry property, we first let $K$ be even. Using trigonometric identities and the fact that $\sin(K\frac{\pi}{2})=0$, we can express $f_2(\pi/2+x)$ as
\begin{align*}
\frac{K\cos(K\frac{\pi}{2})\cos(Kx)}{\cos(x)}+ \frac{ \cos(K\frac{\pi}{2})\sin(Kx)}{\cos^2(x)} \sin(x),
\end{align*}
and that is equal to $f_2(\pi/2-x)$, when $K$ is even.
Second, let $K$ be odd, then $f_2(\pi/2+x)$ can be expressed as
\begin{align*}
-\frac{K\sin(K\frac{\pi}{2})\sin(Kx)}{\cos(x)}+ \frac{ \sin(K\frac{\pi}{2})\cos(Kx)}{\cos^2(x)} \sin(x),
\end{align*}
while $f_2(\pi/2-x)=-f_2(\pi/2+x)$. 
Combining the cases, $K$ even and $K$ odd, we can conclude that
$f_2^2(\pi/2+x)=f_2^2(\pi/2-x)$, and hence the symmetry property holds, and the lemma follows. 
\end{proof}

\subsection{Proof of Lemma~\ref{lemma:loss} and Lemma~\ref{lemma:PL}}\label{appendix:main}
We demonstrate the results of Lemma~\ref{lemma:loss} and Lemma~\ref{lemma:PL} together.
To that end, for a fixed integer $K>1$, $x\in \mathbb{R}$, and $y\in [0,2\pi)$, define
\begin{align}
J (x,y) & \triangleq \sum_{k=0}^{K-1} ( \bJ+ \aJ k T) \cos(y-2xk), \label{eq:appen:J}
\end{align}
where $( \bJ+ \aJ kT) >0$, $k=0,1,\ldots, K-1$, implying that $\bJ >0$, and $\aJ>-\bJ/(K-1)T $. 
Moreover, define
\begin{align}
    L(x) &\triangleq -\inf_{y\in[0,2\pi)} \frac{J (x,y)}{\cJ K},\label{eq:appen:L}
\end{align}
where $\cJ = (\bJ + \aJ T(K-1)/2)>0$.
For $\bJ =\bpl$ and $\aJ =\apl$,~\eqref{eq:appen:J} and~\eqref{eq:appen:L}, correspond to the functions $\JPL$~\eqref{eq:JPL}, and $\LPL$~\eqref{eq:Loss:PL}, respectively.
Similarly, for $\bJ =1 $ and $\aJ =0$,~\eqref{eq:appen:J} and~\eqref{eq:appen:L} correspond to the functions $\JA$~\eqref{eq:J:omega} and $\LA$~\eqref{eq:loss:omega}, respectively.

To show the claims of Lemma~\ref{lemma:loss} and Lemma~\ref{lemma:PL} it is enough to show that for any $\bJ >0$ and $\aJ>-\bJ/(K-1)T $,
\begin{enumerate}
	
	\item $L(x)$ is given by
	\begin{align}\label{eq:appen:L:cases}
	L(x) =\begin{cases}
	1, & x\in\mathcal{X}\\
	\sqrt{\cJ^2f_1^2+\cJJ^2f_2^2}/(\cJ K),& x\notin\mathcal{X}
	\end{cases}
	\end{align}
	where $\cJJ = \aJ T/2$, $f_1$, $f_2$ are defined in~\eqref{eq:appen:f1},~\eqref{eq:appen:f2}, respectively, and $\mathcal{X}$ is defined in~\eqref{eq:appen:X}. 
	
	Note that when $\bJ =1 $ and $\aJ =0$, $L(x) = |f_1(x)|/K$, $x\notin\mathcal{X}$. 
	\item $L(x)$ is periodic with period $\pi$ and symmetric around $\pi/2$.
	
	\item $L(x)$ can be bounded as
	\begin{align}
		\begin{cases}\label{eq:appen:claim:3}
		0 <L(x),& \aJ \neq 0\\
		0 \leq L(x), & \aJ =0
		\end{cases}\quad,\quad L(x) \leq 1 .
	\end{align}
	\item If $x\in \mathcal{X}^\star$, where $\mathcal{X}^\star$ is defined in~\eqref{eq:appen:Xstar}, then
\begin{align}
	L(x)  = \bigg|\frac{\cJJ}{\cJ\sin(x)}\bigg|.
\end{align}	
Note that if $\aJ=0$, then $\cJJ=0$, ($\cJ \sin(x) \neq 0$) and the lower bound in~\eqref{eq:appen:claim:3} is achieved, $L(x)=0$.

	\item For $x\notin \mathcal{X}$, $L(x)$ can be bounded as
	\begin{align}
	\frac{|f_1(x)|}{K}\leq L(x) \leq \frac{\sqrt{(K-1)^2 f_1^2(x)+ f_2^2(x)}}{(K-1)K} .\label{eq:appen:L:Bounds}
	\end{align}

Note that the lower bound is achievable when $\aJ=0$.
\end{enumerate}

	\begin{proof} We start by deriving a different expression of $J$ that facilitates the proof of the lemmas. Let $x\in \mathcal{X}$, it follows that 
	\begin{align}
	    J(x,y) =\cos(y) \sum_{k=0}^{K-1} (\bJ+\aJ k T)= \cos(y)\cJ K.
	\end{align}
	Then, let $ x\notin \mathcal{X}$.	We can write
		\begin{align}\label{eq:appen:J:series}
		J(x,y) &= \bJ~ \mathrm{Re}\bigg\{ \mathrm{e}^{\jmath y} \sum_{k=0}^{K-1}\mathrm{e}^{-\jmath 2xk}  \bigg \} \nonumber \\ 
		&\quad +\aJ T~ \mathrm{Re}\bigg\{ \mathrm{e}^{\jmath y} \frac{\jmath}{2}\frac{d}{dx} \sum_{k=0}^{K-1} \mathrm{e}^{-\jmath 2xk}  \bigg \}.
		\end{align}
		Using the sum of the geometric series, we obtain 
		\begin{align}
		\sum_{k=0}^{K-1}\mathrm{e}^{-\jmath 2xk} &= \mathrm{e}^{-\jmath(K-1)x}f_1(x),\nonumber \\
		\frac{d}{dx} \sum_{k=0}^{K-1} \mathrm{e}^{-\jmath 2xk} &= -\jmath(K-1) \mathrm{e}^{-\jmath(K-1)x} f_1(x)  \nonumber\\
		&\qquad+  \mathrm{e}^{-\jmath(K-1)x} f_2(x),\nonumber 
	\end{align}
	since $f_2(x) =\frac{d}{dx} f_1(x)$.
	Substituting in~\eqref{eq:appen:J:series} we arrive at
		\begin{align}
		J(x,y) &= \begin{cases}
		\cJ K ~\cos (y), & x \in  \mathcal{X}\\
		\cJ f_1 (x) ~\cos\big(y- (K-1) x\big)  \\\quad -\cJJ f_2(x) ~\sin \big(y- (K-1) x\big), & x\notin \mathcal{X}
		\end{cases}\label{eq:appen:J:cases}
		\end{align}
		Now we tackle the lemmas claims.
		\begin{enumerate}
			\item To show that $L(x)$ is given by~\eqref{eq:appen:L:cases}, we first let $x\in \mathcal{X}$. Then, it follows from~\eqref{eq:appen:J:cases} that
			\begin{align*}
				L(x)=-\inf_{y\in[0,2\pi)} \frac{J(x, y)}{\cJ K} = -\inf_{y\in[0,2\pi)}  \cos (y) = 1.
			\end{align*}	
			Second, let $x\notin \mathcal{X}$, and write 
			\begin{align}
			&\cJ f_1 (x) = R(x) \cos\big( \varphi(x) \big),\nonumber \\
			-&\cJJ  f_2(x) = R(x) \sin\big( \varphi(x) \big),\nonumber\\
			&R(x) =\sqrt{\cJ^2f_1^2(x)+ \cJJ^2f_2^2(x)},\nonumber\\
			& \varphi(x) =\arctan \bigg(-\frac{\cJJ f_2(x)}{\cJ f_1 (x)} \bigg).\nonumber
			\end{align}
			Then, it follows from~\eqref{eq:appen:J:cases} that 
			\begin{align*}
			J(x,y) = R(x) \cos\big(y- (K-1) x - \varphi(x)\big),
			\end{align*}
			and thus, when $x\notin \mathcal{X}$
			\begin{align}
			-\inf_{y\in [0,2\pi)} \frac{J(x,y)}{\cJ K} = \frac{R(x)}{\cJ K} = \frac{\sqrt{\cJ^2f_1^2(x)+ \cJJ^2f_2^2(x)}}{\cJ K}.\nonumber
			\end{align}
			where the minimizer is given by $y = \rem \big((K-1) x + \varphi(x)+\pi, 2\pi\big)$.
			Hence,~\eqref{eq:appen:L:cases} holds.
			\item To show that $L$ is periodic and symmetric, let $x \in \mathcal{X}$, then from~\eqref{eq:appen:L:cases} we get $L(x)=1$, which is a constant.  Using~\eqref{eq:appen:basic:per:X} and~\eqref{eq:appen:basic:sym:X} we can conclude that $L$ is periodic with period $\pi$, and symmetric around $\pi/2$ when $x\in \mathcal{X}$.  The same statement is true when $x\notin \mathcal{X}$, since by~\eqref{eq:appen:L:cases} $L(x)=\sqrt{\cJ^2f_1^2(x)+ \cJJ^2f_2^2(x)}/(\cJ K)$, and since both $f_1^2$ and $f_2^2$ are periodic and symmetric as shown in Lemma~\ref{lem:appen:f1} and Lemma~\ref{lem:appen:f2}, respectively.
		
			\item To show~\eqref{eq:appen:claim:3}, we first proof that $L(x)\leq 1$. To that end, observe that $\sum_{k=0}^{K-1} (\bJ+\aJ kT) = \cJ K$. Then using~\eqref{eq:appen:J} we deduce that
			\begin{align}
			-\cJ K\leq &J(x,y) \leq \cJ K.
			\end{align}
			Then, it follows that $L(x)=-\inf_{y}J(x,y)/(\cJ K)\leq 1 $, which is achievable when $x\in \mathcal{X}$.
			
			Second, to show the lower bounds in~\eqref{eq:appen:claim:3}, we employ~\eqref{eq:appen:L:cases} and the fact that $\cJ>0$ to deduce that
			\begin{align}
			L(x) =0 &\iff x\notin \mathcal{X} \nonumber \\
			        &\iff \big(f_1^2(x)=0 , \cJJ^2f_2^2(x) =0\big) .
			\end{align}
			
			Since $f_1(x)=0$ if and only if $x\in \mathcal{X}^\star$, and since 
			\begin{align}
			f_2^2(x) = \bigg ( K\frac{\cos(Kx)}{\sin(x)}\bigg)^2 >0,\quad x\in \mathcal{X}^\star,
			\end{align}
			we conclude that 
			$0<L(x)$ when $\aJ\neq 0$, (since $\cJJ= \aJ T/2$), and $0\leq L(x)$, when $\aJ= 0$.
			
			\item Let $x\in \mathcal{X}^\star$, from the definition~\eqref{eq:appen:Xstar} we deduce that $x=m\pi/K$, where $m\neq qK$, $q\in\mathbb{Z}$, hence $x\notin\mathcal{X}$. Substituting in~\eqref{eq:appen:L:cases}, and recalling from the previous argument that $f_1(x)=0$ we get
			\begin{align}
			L(x)=\frac{|\cJJ f_2(x)|}{\cJ K}&=\bigg|\cJJ K\frac{\cos(m\pi)}{\cJ K\sin(x)}\bigg|\nonumber \\
			&=\bigg|\frac{\cJJ}{\cJ\sin(x)}\bigg|.
			\end{align}

			\item From \eqref{eq:appen:L:cases} it follows that, for $x\notin\mathcal{X}$, 
        	\begin{align}
	        L(x) =
	            \frac{1}{K}\sqrt{f_1^2(x) + \left(\frac{c_2}{c_1}\right)^2 f_2^2(x)}.
	        \end{align}
            From the definitions of $c_1$ and $c_2$, we deduce that 
            \begin{align}
                \frac{c_2}{c_1} 
                &=  \frac{aT/2}{b + a(K-1)T/2}\\
                &=  \frac{1}{K-1}\frac{a/b}{a/b + 2/(K-1)T}\\
                &=  \frac{1}{K-1}\frac{w}{w + 2A}, \label{eq:c2c1}
            \end{align}
            where $w=a/b$ and $A=1/(K-1)T$. From the condition $a> -b/(K-1)T$, we have that \eqref{eq:c2c1} is valid for $w > -1/(K-1)T = -A$. 
            
            We can now write, for $x\notin\mathcal{X}$, 
        	\begin{align}
	        L(x; w) 
	        &= \frac{1}{K}\sqrt{f_1^2(x) + g(w)\frac{f_2^2(x)}{(K-1)^2}}\\ 
	        &= \frac{\sqrt{(K-1)^2f_1^2(x) + g(w) f_2^2(x)}}{K(K-1)}, 
	        \end{align}
            where 
            \begin{align}
                g(w) = \left(\frac{w}{w + 2A}\right)^2,\qquad w > -A.
            \end{align}
            It is easily seen that, for a fixed $x\neq\mathcal{X}$, $L(x;w)$  increases as $g(w)$ increases. Moreover, it is easily verified that
            \begin{align*}
                g(0) &= 0,\\
                \lim_{w\to -A} g(w) &= \lim_{w\to\infty} g(w) = 1,\\
                \frac{d}{d w} g(w) &=
                \frac{4A w}{(w+2A)^3}=
                \begin{cases}
                    \le 0, & -A< w \le 0\\
                    > 0, & w > 0
                \end{cases}
            \end{align*}
			In other words, $g(w)$ is decreasing for $-A< w \le 0$, attains its minimum as $w=0$, and increases for $w\ge 0$. It follows that $\sup_{w>-A} g(w) = 1$ and $\inf_{w>-A} g(w) = 0$, and, therefore, 
            \begin{align}
	        L(x; w) 
	        &\le \sup_{w > -A} L(x, w)\\
	        &= \frac{\sqrt{(K-1)^2 f_1^2(x) + f_2^2(x)}}{K(K-1)}, 
	        \end{align}
        	and 
        	\begin{align}
	        L(x; w) 
	        &\ge \inf_{w > -A} L(x, w) = \frac{|f_1(x)|}{K}. 
	        \end{align}
    		Hence~\eqref{eq:appen:L:Bounds} holds. 
		\end{enumerate}
		
		All the claims have been shown and thus Lemma~\ref{lemma:loss}, and Lemma~\ref{lemma:PL} follow.
	\end{proof}

\end{appendix}

\bibliographystyle{ieeetr}
\bibliography{refIII.bib}

\end{document}